\numberwithin{equation}{section}
\theoremstyle{plain}
\newtheorem{thm}{Theorem}[section]
\newtheorem{cor}[thm]{Corollary}
\newtheorem{lem}[thm]{Lemma}
\newtheorem{prop}[thm]{Proposition}
\theoremstyle{definition}
\newtheorem{assumption}[thm]{Assumption}
\newtheorem{rem}[thm]{Remark}
\newcommand{\eps}{\varepsilon}
\newcommand{\R}{\mathbb{R}}
\newcommand{\LL}{\mathcal{\big\langle}}
\newcommand{\RR}{\mathcal{\big\rangle}}
\newcommand{\cH}{\mathcal{H}}
\newcommand{\jap}[1]{\langle{#1}\rangle}
\DeclareMathOperator{\im}{Im}
\newcommand{\abs}[1]{\lvert{#1}\rvert}
\newcommand{\norm}[1]{\lVert{#1}\rVert}
\newcommand{\ket}[1]{\lvert{#1}\rangle}
\newcommand{\bra}[1]{\langle{#1}\rvert}
\newcommand{\ip}[2]{\langle{#1},{#2}\rangle}
\newcommand{\cO}{\mathcal{O}}
\DeclareMathOperator{\sign}{sign}
\newcommand{\bR}{\mathbf{R}}
\newcommand{\bC}{\mathbf{C}}
\newcommand{\bone}{\mathbf{1}}
\newcommand{\cB}{\mathcal{B}}
\newcommand{\cK}{\mathcal{K}}
\newcommand{\cU}{\mathcal{U}}
\newcommand{\We}{W_{\eps}}
\newcommand{\we}{w_{\eps}}
\newcommand{\wes}{w_{\eps}^{\ast}}
\newcommand{\Reff}{R_{\textrm{eff}}}
\newcommand{\Heff}{H_{\textrm{eff}}}
\newcommand{\tb}{\tilde{b}}
\newcommand{\Deps}{D_{\nu,\eps}}
\newcommand{\Esf}{\mathsf{E}}
\newcommand{\Ie}{I_{\eps}}
\newcommand{\xe}{x_0(\eps)}
\title{A modified Fermi Golden Rule at threshold for 3D magnetic Schrödinger operators}
\author{Pavel Exner\footnote{Doppler Institute for Mathematical Physics and Applied Mathematics, Prague  Czechia, and  Department of Theoretical Physics, Nuclear Physics Institute, Czech Academy of Sciences,
\v Re\v z near Prague, Czechia, \texttt{pavel.exner@ujf.cas.cz}}, \
Arne Jensen\footnote{Department of Mathematical Sciences, Aalborg
University, Thomas Manns Vej 23, DK-9220 Aalborg~\O{}, Denmark,
\texttt{matarne@math.aau.dk}} \ and  Hynek Kova\v{r}\'{\i}k\footnote{DICATAM, Sezione di Matematica, Università degli studi di Brescia, Via Branze 38, Brescia, 25123, Italy, \texttt{hynek.kovarik@unibs.it}}}
\begin{document}
\allowdisplaybreaks

\maketitle

\noindent {\bf Keywords:} resonances, threshold eigenvalue,  magnetic field, Fermi Golden Rule

\smallskip

\noindent {\bf MSC 2020:}  35Q40, 35P05, 81Q10

\begin{abstract}
In this paper we consider three-dimensional Schr\"odinger operators with a simple threshold eigenvalue. We show, under certain 
assumptions, that when a small magnetic field is introduced, this eigenvalue turns into a resonance 
in the time-dependent sense. We find the leading term in the asymptotic expansion of the imaginary part of the resonance and discuss 
the principal differences with respect to resonances induced by  weak electric fields obtained previously in the literature.  
\end{abstract}

\section{Introduction}

We consider the following operators on $\cH=L^2(\bR^3)$:
 %----------------%
\begin{align}
H_0&=-\Delta+V,\label{H0}\\
H_{\lambda}&=(P-\lambda A)^2+V,\label{H-lambda}
\end{align}
 %----------------%
where $P=-i\nabla$ and $V\colon\bR^3\to\bR$ is a bounded potential that decays sufficiently rapidly, say $\abs{V(x)}\leq C(1+\abs{x})^{-\beta}$ with $\beta>2$. The magnetic potential $A\colon \bR^3\to\bR^3$, which generates the magnetic field $B=\nabla \times A$, is also assumed to decay sufficiently rapidly, see Assumption \ref{assump}(2) below for details. Under these  assumptions  the point zero in the spectrum of $H_0$ may be an eigenvalue, a zero resonance, both, or neither; by zero resonance we mean that there exists a solution to $H_0\psi=0$ (in the sense of distributions), which satisfies  $\psi  \in L^2(\bR^d, \jap x^{ -\frac 12-\delta}\, dx)\setminus L^2(\bR^3)$ for any $\delta>0$.

In this paper we assume that zero, the essential spectrum threshold, is a simple eigenvalue of $H_0$, but not a resonance. 
Suppose that we turn on a magnetic field, i.e.~consider $H_{\lambda}$ for $\lambda$ small. The question that we address is the following: 

What happens to the eigenvalue embedded at the threshold zero? The expectation is that it either becomes an isolated eigenvalue close to zero or becomes a resonance. Notice that under our decay conditions on $V$ and $B$ the operator $H_\lambda$ cannot have an eigenvalue embedded in the continuous spectrum, \cite{ahk}. Assume that $\psi_0\in\cH$, $\norm{\psi_0}=1$, satisfies $H_0\psi_0=0$. We impose a condition that ensures the eigenvalue becomes a resonance, described by a metastable state. A time-dependent description of a resonance uses the survival probability of $e^{-itH_{\lambda}}\psi_0$ in the state $\psi_0$, cf.~\cite{Ex85, hun, JN, ort, sw}
and the review paper \cite{har}.
 We are going to show that
 %----------------%
\begin{equation}\label{exp-decay}
\ip{\psi_0}{e^{-itH_{\lambda}}\, \psi_0}=e^{-it(x_0(\lambda)-i\Gamma(\lambda))}+\cO(\lambda^{2p})
\quad\text{as $\:\lambda\to  0$} 
\end{equation}
 %----------------%
with the error term uniform in $t>0$. We have either $p=1/2$ or $p=2$, depending on properties of $H_0$ and $\psi_0$.
The point $x_0(\lambda)-i\Gamma(\lambda)$ with $\Gamma(\lambda)>0$ indicates the location of the resonance. Its imaginary part $\Gamma(\lambda)$, called the resonance width, is of main interest for us since it determines, up to the error term, the rate of decay of the survival probability $| \LL \psi_0, \, e^{-it H_\lambda}\, \psi_0 \RR|^2$, conventionally used to characterize the resonance lifetime.
We find the leading terms in the asymptotics of $x_0(\lambda)$ and $\Gamma(\lambda)$.

If an eigenvalue of $H_0$ is embedded in the continuum, $(0,\infty)$, then a perturbation $H_0+\lambda\widetilde{V}$ may turn it into a resonance the time evolution of which is described by \eqref{exp-decay}. In this case the imaginary part of the resonance position is typically given by the Fermi Golden Rule, $\Gamma(\lambda)=C_{\textup{FGR}}\lambda^2+\cO(\lambda^q)$ for some $q>2$, where the constant  $C_{\textup{FGR}}$ is generically positive. 

Our result gives a modification of the Fermi Golden Rule in the situation when the eigenvalue is situated at the continuous spectrum threshold, and the perturbation is given by a magnetic field.
In particular, our main result, Theorem \ref{main}, states that
 %----------------%
\begin{equation} \label{gamma-0}
\Gamma(\lambda) = \Gamma_0\, |\lambda|^3 +  \mathcal{O}(\lambda^4) \quad\text{as} \ \lambda\to 0.
\end{equation}
 %----------------%
The coefficient  $\Gamma_0$ is generically positive. However, Theorem \ref{main} provides an asymptotic expansion of $\Gamma(\lambda)$ also in the exceptional cases where $\Gamma_0=0$. 

The proof of our main result uses the factorization of the operator difference $H_\lambda-H_0$ recently developed in \cite{JKmag}. When combined with  the Schur-Lifschitz-Feshbach-Grushin (SLFG) formula, this factorization allows us to derive a threshold resolvent expansion of the operator $H_\lambda$ in the topology of weighted Sobolev spaces. Such a resolvent expansion then plays a crucial role in analyzing the location of the resonance, and hence in proving equation \eqref{gamma-0}.
 
It is illustrative to compare the behavior of the threshold resonances considered in this paper with those induced by a weak electric field, in other words arising from  a threshold eigenvalue as a result of an additive perturbation $\lambda W$. Such resonances were analyzed in \cite{JN}. There their existence was proved for $\lambda\to 0+$ under the condition $\LL \psi_0, W \psi_0\RR >0$, and it was shown that the leading term
of $\Gamma$ is generically of order $\lambda^{3/2}$, see \cite[Thm.~3.7]{JN}.
A comparison with equation \eqref{gamma-0} thus reveals two important differences. First, the quantity $\Gamma(\lambda)$ associated to resonances induced by a magnetic field is independent of the sign of $\lambda$. Second, its size is much smaller  with respect to the case of a perturbation by an electric field. This leads to a slower decay of the corresponding survival probability.

It should be also noticed that, by \cite[Prop.~1.1]{JN2},  a result of the type \eqref{exp-decay} defines the leading asymptotics in $x_0(\lambda)$ and $\Gamma(\lambda)$ uniquely, see Corollary \ref{cor-unique} for more details.

Some comments on the literature.
Recall that in other systems the presence of the unperturbed eigenvalue at the continuum edge may also result in the coupling constant power in the leading term different from two, cf. \cite[Example 2]{Ho74} or \cite[Example 3.2.5]{Ex85} for the Friedrichs model example; in the words of J.~Howland the gold from which the rule is made is then mixed with brass. Another notable feature of the Fermi Golden Rule is that the constant 
$C_{\textup{FGR}}$ might vanish. Such situations were studied in \cite{cjn}, where the subsequent term in the asymptotic expansion  of $\Gamma(\lambda)$ was calculated. Threshold resonances arising from a perturbation of 
the two-dimensional Pauli operator were found recently in \cite{BK}. In this case the asymptotic expansion of $\Gamma(\lambda)$ depends heavily on the flux of the magnetic field.

Our paper is organized as follows. In the next section we introduce the necessary notation and state the hypothesis on the magnetic and electric field. The essential resolvent expansions of the operator $H_\lambda$
are proved in Section \ref{sec-SLFG}, cf.~equation \eqref{eq-eff} and Lemmas \ref{lemma44}-\ref{lemma46}. The main result, Theorem \ref{main}, is stated and proved in Section \ref{sec-main}.

\section{Notation, definitions, and assumptions}
The basic Hilbert space is $\cH=L^2(\bR^3)$. The norm is denoted by $\norm{\cdot}$.
The weighted Sobolev spaces are denoted by $H^{k,s}(\bR^3)$, often written as $H^{k,s}$. The norm is denoted by $\norm{\cdot}_{k,s}$. See~\cite{JKmag} for further information on the weighted Sobolev spaces.
The bounded operators between two Hilbert spaces $\cK_1$ and $\cK_2$ are denoted by $\cB(\cK_1,\cK_2)$. In particular, we sometimes use the notation
\begin{equation*}
\cB(k_1,s_1;k_2,s_2)=\cB(H^{k_1,s_1},H^{k_2,s_2}).
\end{equation*}

We impose the following assumptions on $V$ and $A$ in the operators given by \eqref{H0} and \eqref{Heps}.

\smallskip

\begin{assumption}\label{assump}
\
\begin{enumerate}[(1)]
\item \label{assump1}
Assume $V\colon \bR^3\to\bR$ satisfies
\begin{equation}
\abs{V(x)}\lesssim\jap{x}^{-\beta_1},\quad \beta_1>2.
\end{equation}
\item \label{assump2}
Assume $B\in C^1(\bR^3;\bR^3)$ such that $\nabla\cdot B=0$. Assume $B=\nabla\times A$, $A\in C^2(\bR^3;\bR^3)$,  such that
\begin{equation}\label{decay}
\abs{B(x)}\lesssim \jap{x}^{-\beta_2-1},\quad
\abs{A(x)}\lesssim \jap{x}^{-\beta_2}, \quad \beta_2>2.
\end{equation}
\item \label{assump3}
Assume that $0$ is a simple eigenvalue of $H_0$, and furthermore that $H_0$ does not have a zero resonance. Let $\psi_0\in H^{2,0}$, such that $H_0\psi_0=0$ and $\norm{\psi_0}=1$. Assume $\psi_0$ is chosen to be \emph{real-valued}.
\end{enumerate}
\end{assumption}

\medskip

\begin{rem} 
For examples of operators $H_0$ satisfying condition (\ref{assump3}) we refer to \cite[Rem.~5.2]{JN} and \cite[Rem.~6.6]{JK}.
If $B$ satisfies the decay condition in \eqref{decay}, a construction of $A$ satisfying the corresponding decay condition is given in \cite{JKmag}.
\end{rem}

In the sequel it will be convenient  to work with the operator
\begin{equation}\label{Heps}
H_{\eps}=(P-\sqrt{\eps}A)^2+V, \qquad \eps>0.
\end{equation}
Our main result then will follow from a corresponding result for $H_\eps$ after identifying $\eps$ with $\lambda^2$, see the proof of Theorem \ref{main}.

We let
\begin{equation}
\We=H_{\eps}-H_0=\sum_{j=1}^3(-\eps^{1/2}P_jA_j-\eps^{1/2}A_jP_j)
+\eps\abs{A}^2.
\end{equation}
The following result is obvious from the assumptions and definitions.

\medskip

\begin{lem}
Let $A$ satisfy the decay condition in Assumption~\ref{assump}\textup{(\ref{assump2})}. Then we have
\begin{equation}
\We\in\cB(H^{1,s},H^{-1,s+\beta_2})\quad \text{for all $s\in\bR$}.
\end{equation}
\end{lem}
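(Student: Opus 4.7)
The statement is essentially a mapping-property computation for the three terms of $\We$, so the plan is to handle each summand separately and check it maps $H^{1,s}$ into $H^{-1,s+\beta_2}$.

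For the term $\eps^{1/2} A_j P_j$, I would first use the obvious bound $P_j\in\cB(H^{1,s},H^{0,s})$. Then, since $A_j\in C^2(\bR^3)$ with $|A_j(x)|\lesssim \jap{x}^{-\beta_2}$, multiplication by $A_j$ takes $H^{0,s}$ into $H^{0,s+\beta_2}$ with operator norm controlled by $\sup_x \jap{x}^{\beta_2}|A_j(x)|$. The inclusion $H^{0,s+\beta_2}\hookrightarrow H^{-1,s+\beta_2}$ then delivers the claimed mapping.

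For the term $\eps^{1/2} P_j A_j$, I would reverse the order: first check that $A_j$ maps $H^{1,s}$ into $H^{0,s+\beta_2}$ (here I only need the $L^\infty$ bound on $A_j$, not on its derivatives), and then use $P_j\in\cB(H^{0,s+\beta_2},H^{-1,s+\beta_2})$, which follows by duality from $P_j\in\cB(H^{1,-s-\beta_2},H^{0,-s-\beta_2})$ together with a commutator estimate $[P_j,\jap{x}^{s+\beta_2}]=\text{(bounded function)}\cdot \jap{x}^{s+\beta_2-1}$ to swap weights and derivatives with controlled error. For the quadratic term $\eps|A|^2$, the pointwise bound $|A(x)|^2\lesssim \jap{x}^{-2\beta_2}$ shows that multiplication by $|A|^2$ maps $H^{1,s}$ into $H^{0,s+2\beta_2}$, which is embedded in $H^{-1,s+\beta_2}$ since $\beta_2>0$ and the Sobolev index is lowered.

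The only point requiring mild care is the weighted-derivative mapping property $P_j\in\cB(H^{0,s'},H^{-1,s'})$ for general $s'\in\bR$, which is standard for the weighted Sobolev scale used in \cite{JKmag} but must be invoked correctly; other than that, the proof is a one-line estimate per term. No obstacle is expected, and indeed the authors announce the result as ``obvious''.
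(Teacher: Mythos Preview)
Your proposal is correct and is exactly the term-by-term verification the paper has in mind; the paper in fact omits any argument, stating only that the result ``is obvious from the assumptions and definitions.'' Your decomposition into the three summands $A_jP_j$, $P_jA_j$, and $|A|^2$, together with the elementary mapping properties of $P_j$ and of multiplication by a decaying function on the weighted Sobolev scale, is precisely what underlies that assertion.
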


\begin{lem}\label{lemma23}
We have
\begin{equation}\label{vanish}
\ip{\psi_0}{(P_jA_j+A_jP_j)\psi_0}=0,\quad j=1,2,3,
\end{equation}
hence
\begin{equation}
\ip{\psi_0}{\We\psi_0}=\eps\ip{\psi_0}{\abs{A}^2\psi_0}.
\end{equation}
\end{lem}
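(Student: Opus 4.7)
My approach is to prove \eqref{vanish} by a reality-and-symmetry argument, which reduces the statement to an observation that requires essentially no computation; the second identity then follows from the definition of $\We$ by direct substitution.

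The key object is the operator $T_j:=P_jA_j+A_jP_j$. Because $A_j\colon\bR^3\to\bR$ is real-valued, $T_j$ is formally symmetric, so the expectation value $z_j:=\ip{\psi_0}{T_j\psi_0}$ is a real number. On the other hand, $P_j=-i\partial_j$ anticommutes with complex conjugation while multiplication by the real function $A_j$ commutes with it, hence $\overline{T_j\varphi}=-T_j\overline\varphi$ pointwise. Combining this identity with the reality of $\psi_0$ from Assumption~\ref{assump}(\ref{assump3}) gives
\begin{equation*}
\overline{z_j}=\int\psi_0\,\overline{T_j\psi_0}\,dx=-\int\psi_0\,T_j\psi_0\,dx=-z_j,
\end{equation*}
so that $z_j$ is purely imaginary. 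Being simultaneously real and purely imaginary, $z_j$ must vanish, which is \eqref{vanish}. The second identity is then immediate: plugging the definition of $\We$ into $\ip{\psi_0}{\We\psi_0}$ yields
\begin{equation*}
\ip{\psi_0}{\We\psi_0}=-\eps^{1/2}\sum_{j=1}^{3}\ip{\psi_0}{T_j\psi_0}+\eps\ip{\psi_0}{\abs{A}^2\psi_0}=\eps\ip{\psi_0}{\abs{A}^2\psi_0}.
\end{equation*}

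I do not foresee any serious obstacle. The only issues are bookkeeping: to give $z_j$ as a well-defined number I need $T_j\psi_0\in L^2$, which follows from $\psi_0\in H^{2,0}$ together with the decay bounds $\abs{A}\lesssim\jap{x}^{-\beta_2}$ and $\abs{\nabla A}\lesssim\jap{x}^{-\beta_2-1}$ from \eqref{decay}; with the same decay the conjugation identity $\overline{T_j\varphi}=-T_j\overline\varphi$ passes from smooth compactly supported $\varphi$ to $\psi_0$ by approximation. If one prefers an explicit verification, expanding $T_j\psi_0=-i(\partial_jA_j)\psi_0-2iA_j\partial_j\psi_0$ and integrating by parts once in the term $\int A_j\partial_j(\psi_0^2)\,dx$ leads to the same cancellation, with the boundary term killed by the decay of $A$.
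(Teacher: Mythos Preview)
Your proof is correct and follows essentially the same approach as the paper: the paper's proof consists of the single observation that \eqref{vanish} holds because $\psi_0$ is chosen real-valued (Assumption~\ref{assump}(\ref{assump3})), and your argument simply makes explicit the underlying reason---that $T_j$ is symmetric yet anticommutes with complex conjugation, forcing the expectation in a real state to be both real and purely imaginary. The additional remarks on well-definedness of $T_j\psi_0$ and the alternative integration-by-parts verification are sound but go beyond what the paper supplies.
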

\begin{proof}
The result \eqref{vanish} follows from Assumption~\ref{assump}(\ref{assump3}), since we take $\psi_0$ to be real-valued.
\end{proof}

We introduce the notation
\begin{equation}\label{YZdef}
\We=\eps^{1/2}Y+\eps Z,\quad Y=-A\cdot P-P\cdot A,\quad Z=\abs{A}^2.
\end{equation}
together with 
\begin{equation}\label{b-def}
b=\ip{\psi_0}{\abs{A}^2\psi_0}=\ip{\psi_0}{Z\psi_0}.
\end{equation}

In \cite{JKmag} a factorization of $\We$ is an essential tool. We recall its definition.
Take as the intermediate space $\cK=L^2(\bR^3;\bC^9)$.
First factor $A$ as follows:
\begin{equation}
A_j(s)=a_j(x)b_j(x),\quad a_j(x)=\abs{A_j(x)}^{1/2},
\quad b_j(x)=\sign(A_j(x))a_j(x),\quad j=1,2,3.
\end{equation}

We define
\begin{equation}
\we\colon H^{1,-\beta_2/2}\to \cK
\end{equation}
as a $9\times 1$ operator matrix. The entries are (omitting the column index)
\begin{alignat*}{2}
(\we)_j&=\eps^{1/2}A_j,& j&=1,2,3,\\
(\we)_j&=\eps^{1/4}a_{j-3},& j&=4,5,6,\\
(\we)_j&=\eps^{1/4}b_{j-6}P_{j-6},\quad& j&=7,8,9.
\end{alignat*}
Define the following self-adjoint and unitary operator on $\cK$:
\begin{equation*}
\cU=\begin{bmatrix}
\bone_3 & 0 & 0\\
0 & 0& -\bone_3\\
0 & -\bone_3  & 0
\end{bmatrix},
\end{equation*}
where the entries are $3\times3$ operator matrices, i.e.~the identity and the zero operator matrix.

Consider $\We$ as a bounded operator from $H^{1,-\beta_2/2}$ to
$H^{-1,\beta_2/2}$. Then we have the factorization
\begin{equation}\label{factor}
\We=\wes\cU\we.
\end{equation}

\begin{lem}\label{lemma24}
Let Assumption~\ref{assump}\textup{(\ref{assump2})} hold. Then for $0\leq s\leq\beta_2/2$ we have $\we\in\cB(H^{1,-s},\cK)$ and $\wes\in\cB(\cK,H^{-1,s})$, with the estimates
\begin{equation}
\norm{\we}_{\cB(H^{1,-s},\cK)}\leq C\eps^{1/4}
\quad\text{and}\quad
\norm{\wes}_{\cB(\cK,H^{-1,s})}\leq C\eps^{1/4}.
\end{equation}
\end{lem}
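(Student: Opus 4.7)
The plan is to bound the nine scalar components of $\we$ individually as maps $H^{1,-s}\to L^2(\bR^3)$, and then obtain the bound for $\wes$ by duality. The key pointwise input is Assumption~\ref{assump}(\ref{assump2}): $\abs{A_j(x)}\lesssim\jap{x}^{-\beta_2}$, and consequently $a_j(x),\,\abs{b_j(x)}\lesssim\jap{x}^{-\beta_2/2}$. Since $0\leq s\leq\beta_2/2$, the pointwise majorization $\jap{x}^{-\beta_2/2}\leq\jap{x}^{-s}$ holds on all of $\bR^3$, and this is the single inequality doing all the work.

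For the first three rows $(\we u)_j=\eps^{1/2}A_j u$ with $j=1,2,3$: from $\abs{A_j}\lesssim\jap{x}^{-\beta_2}\leq\jap{x}^{-2s}$ one gets $\norm{\eps^{1/2}A_j u}_{L^2}\lesssim\eps^{1/2}\norm{\jap{x}^{-s}u}_{L^2}\leq\eps^{1/2}\norm{u}_{1,-s}$. For the middle three rows, $\norm{\eps^{1/4}a_{j-3}u}_{L^2}\lesssim\eps^{1/4}\norm{\jap{x}^{-\beta_2/2}u}_{L^2}\leq\eps^{1/4}\norm{u}_{1,-s}$. For the last three rows, the same estimate applies with $u$ replaced by $P_{j-6}u$, using $\norm{P_j u}_{0,-s}\leq\norm{u}_{1,-s}$. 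Summing these nine contributions in $\cK=L^2(\bR^3;\bC^9)$, and absorbing $\eps^{1/2}\leq\eps^{1/4}$ (valid on any bounded $\eps$-interval, and here $\eps$ is to be taken small), yields the first bound.

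For the estimate on $\wes$, I would invoke the standard identification of the dual of $H^{1,-s}$ with $H^{-1,s}$ via the $L^2$-pairing, the same identification already underlying the factorization \eqref{factor} and used throughout \cite{JKmag}. Under it $\wes$ is precisely the Banach-space adjoint of $\we$, operator norms are preserved, and the second bound follows from the first.

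The argument presents no real obstacle and is mostly bookkeeping. The only points that need attention are the appearance of the momentum operators $P_{j-6}$ in the last block of $\we$, which is exactly why the domain must be $H^{1,-s}$ rather than a pure weighted $L^2$ space, and the role of the constraint $s\leq\beta_2/2$, which is what allows one to control $\jap{x}^{-\beta_2/2}$ by $\jap{x}^{-s}$ in the middle and last blocks.
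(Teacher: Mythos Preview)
Your proof is correct. The paper itself states this lemma without proof, treating it as an immediate consequence of the definitions of $\we$ and the decay estimate in Assumption~\ref{assump}(\ref{assump2}); your argument supplies exactly the routine verification that is being suppressed, component by component, followed by duality for $\wes$.
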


\section{Resolvent expansions}
We will need the asymptotic expansion of the resolvent $(H_0-z)^{-1}$ as $z\to0$ in the topology of $\cB(H^{-1,s},H^{1,-s'})$. We have the following result. As in \cite{JKmag} we change
to $\kappa$, where for $z\in\bC\setminus[0,\infty)$ we take $\kappa=-i\sqrt{z}$, with
$\im \sqrt{z}>0$, such that $z=-\kappa^2$.

\medskip

\begin{prop}\label{prop31}
Let $N\geq0$. Assume that $V$ satisfies Assumptions~\ref{assump}\textup{(\ref{assump1})} and~\ref{assump}\textup{(\ref{assump3})} for some $\beta_1>2N+9$. Then we have
\begin{equation}\label{eq31}
(H_0+\kappa^2)^{-1}=\frac{1}{\kappa^2}G_{-2}+\frac{1}{\kappa}G_{-1}
+\sum_{j=0}^N\kappa^jG_j+\kappa^{N+1}\widetilde{G}_N(\kappa)
\quad \text{as $\kappa\to0$}
\end{equation}
in the topology of $\cB(H^{-1,s},H^{1,-s'})$ for $s,s'>N+\frac92$.
Here $G_{-2}=P_0=\ket{\psi_0}\bra{\psi_0}$ is the eigenprojection onto eigenvalue $0$ and $G_{-1}=c_0P_0$, where
\begin{equation}\label{eq32}
c_0=\frac{1}{12\pi}(X_1^2+X_2^2+X_3^2), \quad
X_j=\int_{\bR^3}\psi_0(x)V(x)x_jdx,
\quad j=1,2,3.
\end{equation}
The error term satisfies $\norm{\widetilde{G}_N(\kappa)}_{\cB(H^{-1,s},H^{1,-s'})}\leq C$ for
$\abs{\kappa}\leq \delta$, for some $\delta>0$. The constant $C$ depends on $s$, $s'$, and $\delta$.
\end{prop}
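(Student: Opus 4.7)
The plan is to adapt the classical Jensen--Kato strategy for threshold resolvent expansions \cite{JK} to the weighted Sobolev framework used here. I would begin from the symmetric resolvent identity
\[
(H_0+\kappa^2)^{-1} = R_0(\kappa) - R_0(\kappa)\, v^* M(\kappa)^{-1} v\, R_0(\kappa), \qquad R_0(\kappa)=(-\Delta+\kappa^2)^{-1},
\]
where $V = v^* u v$ with $v(x) = \abs{V(x)}^{1/2}$ and $u(x) = \sign V(x)$, and $M(\kappa) = u + v R_0(\kappa) v^*$. The problem is thereby reduced to expanding $R_0(\kappa)$ in powers of $\kappa$ and inverting the singular pencil $M(\kappa)$.

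For the free resolvent I would Taylor-expand the explicit kernel $(4\pi\abs{x-y})^{-1} e^{-\kappa\abs{x-y}}$ to obtain
\[
R_0(\kappa) = \sum_{j=0}^{N+2}\kappa^j G^{(0)}_j + \kappa^{N+3}\widetilde R_N(\kappa),
\]
with $G^{(0)}_j$ having kernel $(-1)^j\abs{x-y}^{j-1}/(4\pi\, j!)$. The polynomial factor $\abs{x-y}^{j-1}$ in the $j$-th coefficient dictates the weight requirement: each $G^{(0)}_j$ is bounded from $H^{-1,s}$ to $H^{1,-s'}$ only for sufficiently large $s,s'$, which, combined with the decay $\jap{x}^{-\beta_1}$ of $V$, yields the stated conditions $\beta_1>2N+9$ and $s,s'>N+9/2$. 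This induces an expansion $M(\kappa) = M_0 + \kappa M_1 + \kappa^2 M_2 + \cdots$ in the auxiliary $L^2$. A direct calculation using $G^{(0)}_0(V\psi_0) = -\psi_0$, equivalent to the eigenvalue equation $-\Delta\psi_0 = -V\psi_0$, shows $M_0(v\psi_0) = 0$; Assumption~\ref{assump}(\ref{assump3}) then guarantees that $\ker M_0$ is exactly the one-dimensional span of $v\psi_0$ and that $M_0$ is Fredholm of index zero.

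Applying the Jensen--Nenciu inversion lemma \cite[Lem.~2.1]{JN} (a Feshbach--Schur block inversion with respect to the rank-one projection onto $\ker M_0$) produces
\[
M(\kappa)^{-1} = \kappa^{-2}\Sigma_{-2} + \kappa^{-1}\Sigma_{-1} + \sum_{j\ge 0}\kappa^j \Sigma_j + O(\kappa^{N+1}),
\]
and substituting back into the symmetric resolvent identity and collecting powers of $\kappa$ yields \eqref{eq31}. The identity $(H_0+\kappa^2)^{-1}\psi_0 = \kappa^{-2}\psi_0$ together with the rank-one self-adjoint structure of the leading coefficients forces $G_{-2} = \ket{\psi_0}\bra{\psi_0} = P_0$ and $G_{-1}$ to be proportional to $P_0$.

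The main obstacle is computing the constant $c_0$ explicitly. The key algebraic fact is the vanishing of the monopole pairing $\ip{v\psi_0}{M_1 v\psi_0}\propto (\int V\psi_0)^2=0$, since $\int V\psi_0 = \int \Delta\psi_0 = 0$ by integration by parts on the exponentially decaying eigenfunction. This pushes the $\kappa^{-1}$ contribution one order deeper in the Feshbach--Schur expansion, where it arises from $\ip{v\psi_0}{M_2 v\psi_0}$ (with $G^{(0)}_2$ of kernel $\abs{x-y}/(8\pi)$) minus the cross term $\ip{v\psi_0}{M_1(M_0|_{\ker^\perp})^{-1} M_1 v\psi_0}$. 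A careful computation, exploiting the vanishing of the monopole to collapse several integrals, identifies the surviving contribution in terms of the squared dipole moments $\sum_j X_j^2$, with $X_j = \int \psi_0 V x_j$, producing the precise constant $c_0 = (12\pi)^{-1}\sum_j X_j^2$. The uniform remainder estimate and the bounds on the regular coefficients $G_j$ for $j\ge 0$ then follow from the uniform control of the Jensen--Nenciu pseudoinverse combined with the error term $\widetilde R_N(\kappa)$ in the free-resolvent expansion.
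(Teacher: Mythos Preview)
Your approach is precisely the one the paper has in mind: the authors omit the argument entirely and simply state that it is ``a variant of the proofs in~\cite{JK,J80,M,JN}.'' Your outline---free-resolvent Taylor expansion, symmetric resolvent identity with the factorization $V=v^*uv$, and Feshbach--Schur/Jensen--Nenciu inversion of the singular pencil $M(\kappa)$---is exactly the machinery of those references, so there is nothing to compare.

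One correction is needed. You justify $\int V\psi_0=0$ by ``integration by parts on the exponentially decaying eigenfunction,'' but a threshold eigenfunction in dimension three does \emph{not} decay exponentially; it decays only polynomially, generically like $\jap{x}^{-2}$ (see \cite[Lem.~2.2]{JK}). The vanishing of $\int V\psi_0$ is still true and is in fact the very characterization distinguishing a zero eigenvalue from a zero resonance: from $\psi_0=-G^{(0)}_0V\psi_0$ one reads off that the leading $\abs{x}^{-1}$ term in the multipole expansion of $\psi_0$ has coefficient $-(4\pi)^{-1}\int V\psi_0$, and this must vanish for $\psi_0\in L^2(\bR^3)$. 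With that replacement your monopole-vanishing step goes through, and the dipole computation yielding $c_0=(12\pi)^{-1}\sum_jX_j^2$ is the standard one from \cite{JK,JN}. The polynomial decay is also what makes the weighted-Sobolev bookkeeping nontrivial, so be careful when you say the expansion is ``adapted'' to $\cB(H^{-1,s},H^{1,-s'})$: the passage from the $L^{2,s}$ framework of \cite{JK,JN} to $H^{\pm1,s}$ requires the additional smoothing estimates for $R_0(\kappa)$ recorded in \cite{JKmag}.
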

The proof of this proposition is a variant of the proofs in~\cite{JK,J80,M,JN}. We omit the details.

%%%%%%%%%%%%%%%%%%%%%%%%%%%%%%%%%%%%%%%%%%%%%%
\section{The SLFG formula}
\label{sec-SLFG}
We need the SLFG formula from~\cite{JN} for the Hamiltonians considered here.
As above $P_0$ is the eigenprojection for eigenvalue $0$ for $H_0$. Let $Q_0=I-P_0$. We write $\cH=P_0\cH\oplus Q_0\cH$. Let $R_{\eps}(z)=(H_{\eps}-zI)^{-1}$. We denote the resolvent of $Q_0H_{\eps}Q_0$ in the space $Q_0\cH$ by $R_{0,\eps}(z)$. i.e. $R_{0,\eps}(z)=(Q_0H_{\eps}Q_0-zQ_0)^{-1}$.

Then we have the matrix representation
\begin{equation*}
R_{\eps}(z)=\begin{bmatrix}
\Reff(z) & R_{12}(z)\\ R_{21}(z) & R_{22}(z)
\end{bmatrix},
\end{equation*}
where
\begin{align*}
\Reff(z)&=\bigl(P_0H_{\eps}P_0-P_0\We Q_0R_{0,\eps}(z)Q_0\We P_0-zP_0\bigr)^{-1},
\\
R_{12}(z)&=-\Reff(z)P_0\We Q_0 R_{0,\eps}(z),
\\
R_{21}(z)&=-R_{0,\eps}(z)Q_0\We P_0\Reff(z),
\\
R_{22}(z)&=R_{0,\eps}(z)+R_{0,\eps}(z)
Q_0\We P_0\Reff(z)P_0\We Q_0R_{0,\eps}(z).
\end{align*}
With a slight abuse of notation we write
\begin{equation*}
\Reff(z)=(\Heff(z)-zP_0)^{-1}\quad\text{on $P_0\cH$}
\end{equation*}
and
\begin{equation} \label{eq-eff}
P_0(\Heff(z)-z)P_0=F(z,\eps)P_0,
\end{equation}
where
\begin{align}
F(z,\eps)&=\ip{\psi_0}{\We\psi_0}-z
-\ip{\psi_0}{\We Q_0 R_{0,\eps}(z)Q_0\We\psi_0}
\notag\\
&=\eps\ip{\psi_0}{\abs{A}^2\psi_0}-z
-\ip{\psi_0}{\We Q_0 R_{0,\eps}(z)Q_0\We\psi_0}
\end{align}
by Lemma~\ref{lemma23}.

We note that on $Q_0\cH$
\begin{equation*}
Q_0(H_0-z)^{-1}Q_0=\bigl(Q_0H_0Q_0-zQ_0\bigr)^{-1}
\end{equation*}
and introduce the notation
\begin{equation}
G_{\eps}(z)=\we Q_0(H_0-z)^{-1}Q_0\wes.
\end{equation}
Using the factorization \eqref{factor} and the mapping properties of $\we$ and $\wes$ we can derive the analogue of \cite[Equation (2.14)]{JN}:
\begin{align}
F(z,\eps)&=\eps\ip{\psi_0}{\abs{A}^2\psi_0}-z
%\notag\\
%&\quad
-\ip{\psi_0}{\wes\cU\{G_{\eps}(z)-G_{\eps}(z)
[\cU+G_{\eps}(z)]^{-1}G_{\eps}(z)\}\cU\we\psi_0}.
\label{eq43}
\end{align}
We omit the details of this derivation.

%To use this result we need the following assumption.
%\begin{assumption}
%Assume that $b=\ip{\psi_0}{\abs{A}^2\psi_0}>0$.
%\end{assumption}

We would now like to use the resolvent expansion from Proposition~\ref{prop31} to get an expansion of $G_{\eps}(z)$. There is an important point to be made. We cannot just apply $Q_0$ on both sides of the expansion of $(H_0-z)^{-1}$. The reason is that the eigenfunction $\psi_0$ may not decay fast enough to ensure that
$Q_0H^{-1,s}\subset H^{-1,s}$ for the $s$ so large that the expansion holds.
Instead, we observe that on the space $\cH$ we have
\begin{equation*}
Q_0(H_0-z)^{-1}Q_0=(H_0-z)^{-1}+\frac{1}{z}P_0.
\end{equation*}
The right hand side has an asymptotic expansion in the topology $\cB(H^{-1,s},H^{1,-s})$, see Proposition~\ref{prop31}. We now switch to the $\kappa$ variable instead of the $z$ variable.

\medskip

\begin{prop}\label{prop42}
Let $N\geq0$.
Let Assumption~\ref{assump} hold with $\beta_j>2N+9$, $j=1,2$. Then the following asymptotic expansion holds in the topology of $\cB(\cK)$, as $\kappa\to0$.
\begin{equation}\label{Geps-expand}
G_{\eps}(\kappa)=\sum_{j=-1}^N\kappa^j\widetilde{G}_{\eps,j}+\kappa^{N+1}\widehat{E}_N(\kappa,\eps).
\end{equation}
We have for $j=-1,0,\ldots,N$
\begin{equation}\label{Geps-expand-c}
\widetilde{G}_{\eps,j}=\we G_j\wes,\quad \widehat{E}_N(\kappa,\eps)=\we\widetilde{G}_N(\kappa)\wes.
\end{equation}
with
\begin{equation}
\norm{\widetilde{G}_{\eps,j}}_{\cB(\cK)}\leq C_1\eps^{1/2},
\quad \norm{\widehat{E}_N(\kappa,\eps)}_{\cB(\cK)}\leq C_2\eps^{1/2},
\end{equation}
and $C_2$ independent of $\kappa$ for $\abs{\kappa}\leq1$.
\end{prop}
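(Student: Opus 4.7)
The plan is to combine the resolvent expansion of Proposition~\ref{prop31} with the factorization estimates of Lemma~\ref{lemma24}. The only mild subtlety is that Proposition~\ref{prop31} expands $(H_0+\kappa^2)^{-1}$, not its $Q_0$-compression, and sandwiching by $Q_0$ directly is not available in the weighted spaces since $\psi_0$ need not lie in $H^{1,s}$ for the large $s$ required. I would therefore first dispose of this by using the fact that $P_0$ commutes with $H_0$, which gives, on all of $\cH$,
\begin{equation*}
Q_0(H_0-z)^{-1}Q_0 \;=\; (H_0-z)^{-1}+\frac{1}{z}P_0 \;=\;(H_0+\kappa^2)^{-1}-\frac{1}{\kappa^2}P_0.
\end{equation*}

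Next, I would substitute the expansion \eqref{eq31} and observe that, since $G_{-2}=P_0$, the singular $\kappa^{-2}P_0$ term in Proposition~\ref{prop31} cancels exactly against the $-\kappa^{-2}P_0$ contribution from the identity above. Consequently,
\begin{equation*}
Q_0(H_0-z)^{-1}Q_0 \;=\; \frac{1}{\kappa}G_{-1}+\sum_{j=0}^{N}\kappa^{j}G_{j}+\kappa^{N+1}\widetilde{G}_N(\kappa),
\end{equation*}
valid in $\cB(H^{-1,s},H^{1,-s'})$ for every $s,s'>N+\tfrac92$. I would then sandwich this identity between $\we$ on the right and $\wes$ on the left (in the sense of operator composition), and define $\widetilde{G}_{\eps,j}=\we G_{j}\wes$ and $\widehat{E}_N(\kappa,\eps)=\we\widetilde{G}_N(\kappa)\wes$, yielding the claimed form \eqref{Geps-expand}--\eqref{Geps-expand-c}.

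To make this sandwich meaningful, the weights must be consistent: $\wes$ needs to map $\cK\to H^{-1,s}$ and $\we$ needs to map $H^{1,-s'}\to\cK$. By Lemma~\ref{lemma24} this forces $s,s'\leq\beta_2/2$, while Proposition~\ref{prop31} demands $s,s'>N+\tfrac92$. Both constraints are simultaneously satisfiable precisely under the standing hypothesis $\beta_j>2N+9$, so I would pick any $s,s'$ in that range. The norm estimates are then immediate: each of $\we$ and $\wes$ contributes a factor $\eps^{1/4}$ by Lemma~\ref{lemma24}, while the operator norms of $G_j$ and of $\widetilde{G}_N(\kappa)$ (the latter uniformly for $|\kappa|\leq 1$) are controlled by Proposition~\ref{prop31}. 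Multiplying the three factors produces $\|\widetilde{G}_{\eps,j}\|_{\cB(\cK)}\leq C_1\eps^{1/2}$ and $\|\widehat{E}_N(\kappa,\eps)\|_{\cB(\cK)}\leq C_2\eps^{1/2}$, with $C_2$ independent of $\kappa$ on $|\kappa|\leq 1$.

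I do not anticipate any serious obstacle: once the $P_0$-term cancellation is noted, the proposition is essentially a direct concatenation of Proposition~\ref{prop31} and Lemma~\ref{lemma24}, and the only point worth double-checking is the arithmetic that matches the decay rate $\beta_2$ to the Sobolev weight required by the $N$-term expansion.
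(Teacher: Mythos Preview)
Your proposal is correct and is exactly the approach the paper takes: its proof is the single sentence ``The results follow from Proposition~\ref{prop31} and Lemma~\ref{lemma24},'' and the $P_0$-cancellation you spell out is precisely the observation the paper records in the paragraph preceding the proposition. Apart from a harmless verbal slip (you say ``$\we$ on the right and $\wes$ on the left'' but then correctly write $\we G_j\wes$ and check the mapping properties in the right order), there is nothing to add.
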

\begin{proof}
The results follow from Proposition~\ref{prop31} and Lemma~\ref{lemma24}. % We omit the details.
\end{proof}

\begin{assumption}\label{nu-assump}
Let $\nu\geq-1$ be an odd integer. Assume that $G_j=0$ for $j=-1,1,\ldots,\nu-2$.
Assume that $\ip{\psi_0}{YG_{\nu}Y\psi_0}\neq0$.
\end{assumption}

\begin{assumption}\label{pos-assump}
If $\nu=-1$, assume that
\begin{equation}\label{assump-b}
 b-2\ip{\psi_0}{YG_0Y\psi_0}>0.
\end{equation}
If $\nu\geq1$, assume that
\begin{equation}\label{assump-bb}
 b-\ip{\psi_0}{YG_0Y\psi_0}>0.
\end{equation}
\end{assumption}

Define
\begin{equation}\label{tb-def}
\tb=b-\ip{\psi_0}{YG_0Y\psi_0},
\end{equation}
with  $b$ given by \eqref{b-def}, and note that $\tb>0$.

We define
\begin{equation}\label{r-def}
r(\nu)=\begin{cases}
2, & \nu=-1,\\
4, & \nu=1,\\
\tfrac12(\nu+8), & \nu\geq3.
\end{cases}
\end{equation}
 Define
\begin{equation}\label{Deps}
\Deps=\{z=x+i\eta\mid
\tfrac12\tb\eps<x<\tfrac32\tb\eps,\;0<\abs{\eta}<(\tb\eps)^{r(\nu)}\}.
\end{equation}
We freely mix the notation $z$ and $\kappa$ for points in $\Deps$. Note that if $z\in \Deps$ then we have a lower bound $\abs{\kappa}=\abs{z}^{1/2}>C\eps^{1/2}$
where $C$ depends only on the parameter entering into the definition of $\Deps$. The power $r(\nu)$ is chosen to ensure that for $z\in\Deps$ the imaginary part $\im z$ can be absorbed in the error term in the expansion of $F(z,\eps)$, cf. Lemmas~\ref{lemma44}--\ref{lemma46}.

We write in statements below `for sufficiently small $\eps$' instead of the more precise formulation `there exists $\eps_0$ such that the statement holds for all $\eps$, $0<\eps<\eps_0$.'

We now state the essential decomposition and expansion results. Let
\begin{equation}\label{eq427}
\beta_{\nu}=\ip{\psi_0}{YG_{\nu}Y\psi_0}, \qquad \nu =1,3,5...
\end{equation}

\medskip

\begin{lem}[Case $\nu=-1$]
\label{lemma44}
Let Assumption~\ref{nu-assump} hold for $\nu=-1$.
For $\eps$ sufficiently small and $z\in \Deps$ we have
\begin{equation}\label{F-nu=-1}
F(z,\eps)=H(z,\eps)+r(z,\eps)
\end{equation}
with
\begin{equation}
\sup_{\substack{0<\eps<\eps_0\\ z\in\Deps}}\eps^{-2}\abs{r(z,\eps)}<\infty.
\end{equation}
and
\begin{equation}\label{eq412}
H(z,\eps)=\eps\tb-z-\eps^{3/2}\alpha_{-1}-
\kappa^{-1}\eps^2\beta_{-1},
\end{equation}
Here $\tb$ is defined in \eqref{tb-def} and $\alpha_{-1}, \beta_{-1}$ are real numbers given by
\begin{equation}\label{beta-1-def}
\alpha_{-1}=\ip{\psi_0}{(YG_0Z+ZG_0Y-YG_0YG_0Y)\psi_0}
\end{equation}
and
\begin{equation}\label{g-1-def}
\beta_{-1}=bc_0\bigl(b-2\ip{\psi_0}{YG_0Y\psi_0}\bigr).
\end{equation}
By Assumption \ref{pos-assump} we have $\beta_{-1}>0$.
\end{lem}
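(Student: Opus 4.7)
The plan is to expand the SLFG identity~\eqref{eq43} in powers of $\eps^{1/2}$ and $\kappa$, and isolate the terms appearing in $H(z,\eps)$. Since $\|G_\eps(\kappa)\|_{\cB(\cK)}=O(\eps^{1/2})$ uniformly on $|\kappa|\leq 1$ by Proposition~\ref{prop42}, the operator $\cU+G_\eps(\kappa)$ is boundedly invertible for small $\eps$, and a Neumann expansion gives
\begin{equation*}
G_\eps - G_\eps[\cU+G_\eps]^{-1}G_\eps = G_\eps - G_\eps\cU G_\eps + G_\eps\cU G_\eps\cU G_\eps - \dotsb .
\end{equation*}
Iterated use of $\wes\cU\we=\We$ turns the bracketed expression inside the scalar product in~\eqref{eq43} into $\We K\We - \We K\We K\We+\dotsb$, where $K:=(H_0-z)^{-1}+z^{-1}P_0$ admits the expansion $K=\kappa^{-1}c_0P_0+G_0+O(\kappa)$ in $\cB(H^{-1,s},H^{1,-s'})$ by Proposition~\ref{prop31}.

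I would then substitute this expansion together with $\We=\eps^{1/2}Y+\eps Z$, making systematic use of $P_0\We\psi_0=\eps b\psi_0$ and $Q_0\We\psi_0=\eps^{1/2}Y\psi_0+\eps Q_0 Z\psi_0$, both immediate from Lemma~\ref{lemma23}. Taking the scalar product with $\psi_0$ and collecting contributions order by order gives: at order $\eps$, the term $-\eps\langle\psi_0,YG_0Y\psi_0\rangle$ coming from $-\We G_0\We$, which together with $\eps b$ in~\eqref{eq43} assembles into $\eps\tb$; at order $\eps^{3/2}$, the $Y/Z$-cross terms $-(\langle\psi_0,YG_0Z\psi_0\rangle+\langle\psi_0,ZG_0Y\psi_0\rangle)$ from $-\We G_0\We$ combine with $+\langle\psi_0,YG_0YG_0Y\psi_0\rangle$ from the cubic Neumann term $+\We G_0\We G_0\We$ to produce exactly $-\eps^{3/2}\alpha_{-1}$ with $\alpha_{-1}$ as in~\eqref{beta-1-def}; at order $\kappa^{-1}\eps^2$, the singular piece of $K$ contributes $-c_0 b^2\kappa^{-1}\eps^2$ from $-\We K\We$, while in $+\We K\We K\We$ the two crosses of $\kappa^{-1}c_0 P_0$ with $G_0$ give $+2c_0 b\kappa^{-1}\eps^2\langle\psi_0,YG_0Y\psi_0\rangle$, summing to $-\kappa^{-1}\eps^2\beta_{-1}$ with $\beta_{-1}$ as in~\eqref{g-1-def}.

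The final step is to bound the remainder by $O(\eps^2)$ uniformly on $\Deps$. The definition of $\Deps$ is calibrated for this: for $z\in\Deps$ with $\nu=-1$ one has $|\kappa|\sim\eps^{1/2}$ and $|\im z|\leq(\tb\eps)^{r(-1)}=(\tb\eps)^2$, so $\im z$ itself is already $O(\eps^2)$ and absorbed, and residual powers $\kappa^{-1}\eps^{5/2}$, $\kappa^{-2}\eps^3$, etc., produced by further Neumann iterations or by the $O(\kappa)$ tail of $K$, are all of size $\eps^2$. The main obstacle, in my view, is the combinatorial bookkeeping needed to verify that no additional contributions to the three announced orders survive past the linear and cubic Neumann terms. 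This rests on $P_0Y\psi_0=0$ (which kills $c_0 P_0$ as soon as it faces an $\eps^{1/2}Y$) and on $P_0\We\psi_0=\eps b\psi_0$ being rank-one, so that every extra passage through $P_0$ costs an additional factor $\eps$ and every additional $\We$ costs $\eps^{1/2}$. Positivity $\beta_{-1}>0$ is then immediate from Assumption~\ref{pos-assump}.
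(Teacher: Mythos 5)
Your proposal is correct and follows essentially the same route as the paper. The paper also starts from \eqref{eq43}, splits the bracket into $G_{\eps}$ (linear in $K$) and $G_{\eps}[\cU+G_{\eps}]^{-1}G_{\eps}$ (higher order), then expands the latter via the finite Neumann series from Lemma~\ref{lemma47}, obtaining precisely the terms $\ip{\psi_0}{\We K\We K\We\psi_0}$ and $\ip{\psi_0}{\We K\We K\We K\We\psi_0}$ up to an $\cO(\eps^2)$ remainder. Your identifications of the contributions at orders $\eps$, $\eps^{3/2}$ and $\kappa^{-1}\eps^2$ (including the sign bookkeeping and the role of $P_0 Y\psi_0=0$ in suppressing the singular part of $K$) match the paper's computation; the only presentational difference is that you combine the full Neumann series with $K=\kappa^{-1}c_0P_0+G_0+\cO(\kappa)$ in one pass, whereas the paper first groups by the number of $G_\eps$ factors and then expands each group in powers of $\kappa$.
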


\medskip

\begin{lem}[Case $\nu=1$]\label{lemma45}
For $\eps$ sufficiently small and $z\in \Deps$ we have
\begin{equation}\label{F-nu=1}
F(z,\eps)=H(z,\eps)+r(z,\eps)
\end{equation}
with
\begin{equation}
\sup_{\substack{0<\eps<\eps_0\\ z\in\Deps}}\eps^{-4}\, \abs{r(z,\eps)}<\infty.
\end{equation}
We have
\begin{equation}\label{eq417}
H(z,\eps)=\eps(\tb-\kappa^2\alpha_2-\kappa^4\alpha_4)-z-\eps\kappa (\beta_1+\kappa^2\alpha_3+\kappa^4\alpha_5)
+\eps^{3/2}f(z,\eps)+\eps^{3/2}\kappa g(z,\eps).
\end{equation}
Here $\tb$ are $\beta_1$ are given by \eqref{tb-def} and \eqref{eq427}, and
\begin{align}
%\beta_1 & =\ip{\psi_0}{YG_1Y\psi_0}  \label{beta-1}, \\[5pt]
\alpha_j & =\ip{\psi_0}{YG_jY\psi_0},\quad j=2,3,4,5,
\end{align}
and
$f(z,\eps)$ and $g(z,\eps)$ are polynomials in $z$ with real coefficients. We have the following expressions:
\begin{align}
f(z,\eps)&=c_{\frac32,0}+\kappa^2 c_{\frac32,2}+\kappa^4 c_{\frac32,4}
+\eps^{1/2}\bigl(c_{2,0} + \kappa^2 c_{2,2}\bigr)\notag\\
&\quad
+\eps\bigl(c_{\frac52,0}+\kappa^2 c_{\frac52,2}\bigr)
+\eps^{3/2}C_{3,0}+\eps^2 c_{\frac72,0}
\end{align}
and
\begin{equation}
g(z,\eps)=c_{\frac32,1}+\kappa^2c_{\frac32,3}
+\eps^{1/2}\bigl(c_{2,1}+\kappa^2c_{2,3}\bigr)
+\eps c_{\frac52,1}+\eps^{3/2}c_{3,1}.
\end{equation}
\end{lem}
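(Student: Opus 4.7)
The plan is to mirror the scheme of Lemma \ref{lemma44}, tracking the additional terms that arise because in the present case $\nu=1$ the standing hypothesis $G_{-1}=c_0 P_0=0$ forces me to expand to higher order in $\kappa$ in order to capture the leading $\eps\kappa\beta_1$ behavior. Starting from \eqref{eq43}, the factorization $\wes\cU\we=\We$ combined with the geometric series expansion of $(\cU+G_\eps)^{-1}$, which converges since $\|G_\eps(z)\|_{\cB(\cK)}=O(\eps^{1/2})$ by Proposition \ref{prop42}, yields
\begin{equation*}
F(z,\eps) = \eps b - z - \sum_{m=0}^{\infty}(-1)^m \ip{\psi_0}{\bigl(\We R(z)\bigr)^{m+1}\We\,\psi_0}, \qquad R(z):=Q_0(H_0-z)^{-1}Q_0.
\end{equation*}
Each $m$-th summand has size $O(\eps^{(m+2)/2})$ since $\We=\eps^{1/2}Y+\eps Z$, and because $|\kappa|\sim\eps^{1/2}$ on $\Deps$ it suffices to retain $m=0,1,\ldots,5$ explicitly and majorize the tail $\sum_{m\ge6}$ by a convergent geometric series of size $O(\eps^4)$.

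Into each retained summand I will insert the resolvent expansion of Proposition \ref{prop31}. The $\kappa^{-2}G_{-2}=P_0$ piece drops out through the identity $R(z)=(H_0-z)^{-1}+z^{-1}P_0$, and the assumption $G_{-1}=0$ removes the $\kappa^{-1}$ piece as well, so
\begin{equation*}
R(z)=\sum_{j=0}^{N}\kappa^j G_j+\kappa^{N+1}\widetilde{G}_N(\kappa)
\end{equation*}
in the appropriate weighted topology. Combining this with $\We=\eps^{1/2}Y+\eps Z$, the $m=0$ contribution expands to
\begin{equation*}
\ip{\psi_0}{\We R\We\,\psi_0}=\eps\sum_{j\ge0}\kappa^j\ip{\psi_0}{YG_jY\psi_0}+\eps^{3/2}\sum_{j\ge0}\kappa^j\bigl(\ip{\psi_0}{YG_jZ\psi_0}+\ip{\psi_0}{ZG_jY\psi_0}\bigr)+\eps^2\sum_{j\ge0}\kappa^j\ip{\psi_0}{ZG_jZ\psi_0}
\end{equation*}
modulo a controlled $\kappa^{N+1}$ error. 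The identity $\eps b-\eps\ip{\psi_0}{YG_0Y\psi_0}=\eps\tb$ together with the definitions of $\beta_1$ and $\alpha_j$ for $j=2,3,4,5$ then extracts the main explicit polynomial $\eps(\tb-\kappa^2\alpha_2-\kappa^4\alpha_4)-\eps\kappa(\beta_1+\kappa^2\alpha_3+\kappa^4\alpha_5)$ appearing in \eqref{eq417}. All remaining contributions -- the $\eps^{3/2}$ and $\eps^2$ lines above, plus the entirety of the $m=1,\ldots,5$ Neumann terms -- form a polynomial in $\eps^{1/2}$ and $\kappa$ of total order at most $\eps^{7/2}$, and splitting by the parity of the $\kappa$-power reassembles it into $\eps^{3/2}f(z,\eps)+\eps^{3/2}\kappa g(z,\eps)$ of precisely the advertised monomial shape. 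Reality of the coefficients $c_{a,b}$ follows from self-adjointness of $Y$, $Z$ and the $G_j$ together with the real-valuedness of $\psi_0$.

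The remainder $r(z,\eps)$ absorbs: (i) the $\kappa^{N+1}\widetilde{G}_N(\kappa)$ error in each retained summand, with $N$ taken large enough that $|\kappa|^{N+1}\lesssim\eps^4$ on $\Deps$; (ii) the monomials $\eps^{a/2}\kappa^b$ with $a+b\ge8$ discarded from the explicit expansions, each of size $\lesssim\eps^4$; (iii) the tail $\sum_{m\ge6}$, controlled by Proposition \ref{prop42}; and (iv) the imaginary part of $z$, which by the definition of $\Deps$ satisfies $|\im z|\le(\tb\eps)^{r(1)}=(\tb\eps)^4$ and enters $F$ only via the explicit $-z$ term. The main obstacle is purely the bookkeeping: across the six retained Neumann contributions and the eight $\eps^{1/2}$-layers from $\eps$ up to $\eps^{7/2}$, one has to identify exactly which monomials land in $H(z,\eps)$ and verify that the rest is uniformly $O(\eps^4)$ throughout $\Deps$.
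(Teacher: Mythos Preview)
Your proposal is correct and follows the same scheme the paper indicates (it merely says the proof is ``very similar to the proof of Lemma~\ref{lemma44}'' and omits the details); your single-series organization $\sum_{m\ge0}(-1)^m\ip{\psi_0}{(\We R)^{m+1}\We\,\psi_0}$ is a mild streamlining enabled by the hypothesis $G_{-1}=0$, which makes Lemma~\ref{lemma47} collapse to the plain Neumann expansion. Your item (iv) is superfluous and slightly misstated---$\im z$ also enters through $\kappa$, but $H(z,\eps)$ is already defined as a function of complex $z$ via $\kappa$, so nothing needs to be absorbed separately---yet this does not affect the argument.
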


\medskip

\begin{lem}[Case $\nu\geq3$]
\label{lemma46}
Let Assumption~\ref{nu-assump} hold for some $\nu\geq3$. Then we can write
\begin{equation}
F(z,\eps)=H(z,\eps)+r(z,\eps),
\end{equation}
where
\begin{equation}
\sup_{\substack{0<\eps<\eps_0\\ z\in\Deps}}\eps^{-(\nu+8)/2}\abs{r(z,\eps)}<\infty.
\end{equation}
We have
\begin{align}
H(z,\eps)&=\eps\tb-z-\eps\kappa^{\nu}\bigl[\beta_{\nu}+\kappa^2a_2(\eps)
+\kappa^4a_4(\eps)\bigr] +\eps^{3/2}f(z,\eps) + \eps\kappa^2g(z,\eps),
\label{eq425}
\end{align}
with $\beta_\nu$ given by \eqref{eq427}.
The terms $a_2(\eps)$ and $a_4(\eps)$ are polynomials in $\eps^{1/2}$ with real coefficients. Furthermore, $f(z,\eps)$ and $g(z,\eps)$ are polynomials in $\kappa^2$ and $\eps^{1/2}$ with real coefficients. We have
\begin{equation*}
f(z,\eps)=f_0+\cO(\eps^{1/2})\quad\text{and}\quad
g(z,\eps)=g_0+\cO(\eps^{1/2}),
\end{equation*}
for $z\in\Deps$ and $\eps\to0$.
\end{lem}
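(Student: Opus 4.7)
The plan is to substitute Proposition~\ref{prop42} into the SLFG identity~\eqref{eq43}, expand the bracket inside the inner product as a Neumann series, and organize the resulting terms by their orders in $\eps$ and $\kappa$. Since $\|G_{\eps}(\kappa)\|_{\cB(\cK)}=\cO(\eps^{1/2})$ and $\cU$ is unitary, the Neumann inversion $[\cU+G_{\eps}]^{-1}=\cU\sum_{m\geq 0}(-G_{\eps}\cU)^{m}$ converges for sufficiently small $\eps$. A direct calculation using $\We=\wes\cU\we$ collapses the alternating $\wes\cU\we$ factors and gives
\begin{equation*}
\bigl\langle\psi_0,\wes\cU\{G_{\eps}-G_{\eps}[\cU+G_{\eps}]^{-1}G_{\eps}\}\cU\we\psi_0\bigr\rangle
=\sum_{n\geq 0}(-1)^{n}T^{(n)}(z,\eps),
\end{equation*}
where $T^{(n)}(z,\eps):=\bigl\langle\psi_0,\,\We\bigl(Q_0(H_0-z)^{-1}Q_0\,\We\bigr)^{n+1}\psi_0\bigr\rangle$. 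Each $T^{(n)}$ contains $n+2$ factors of $\We=\eps^{1/2}Y+\eps Z$, so its leading order as $\eps\to 0$ is $\eps^{(n+2)/2}$; truncating the series at $n=\nu+5$ produces a remainder bounded by $C\eps^{(\nu+8)/2}$, uniformly for $z\in\Deps$ (on which $|\kappa|$ is of order $\eps^{1/2}$ while each $Q_0(H_0-z)^{-1}Q_0$ factor is $\cO(1)$ in the relevant weighted-Sobolev topology).

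For each retained $T^{(n)}$ I then insert the expansion~\eqref{Geps-expand}, choosing $N$ large enough (depending on $n$) that the $\kappa^{N+1}\widehat{E}_N(\kappa,\eps)$-remainder contributes $\cO(\eps^{(\nu+8)/2})$. Under Assumption~\ref{nu-assump} with $\nu\geq 3$ the coefficients $\widetilde{G}_{\eps,-1},\widetilde{G}_{\eps,1},\ldots,\widetilde{G}_{\eps,\nu-2}$ all vanish, so the expansion of $G_\eps$ contains only the even powers $\kappa^{0},\kappa^{2},\ldots,\kappa^{\nu-1}$ and the odd powers $\kappa^{j}$ with $j\geq\nu$. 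Writing $\We=\eps^{1/2}Y+\eps Z$ one obtains $\langle\psi_0,\We G_j\We\psi_0\rangle=\eps\alpha_j+\eps^{3/2}\sigma_j+\eps^{2}\delta_j$ with real coefficients (by Assumption~\ref{assump}(\ref{assump3})), where we write $\alpha_{j}:=\langle\psi_0,YG_jY\psi_0\rangle$, $\sigma_{j}:=\langle\psi_0,(YG_jZ+ZG_jY)\psi_0\rangle$, and $\delta_{j}:=\langle\psi_0,ZG_jZ\psi_0\rangle$. The resulting decomposition of $F(z,\eps)=\eps b-z-\sum_n(-1)^n T^{(n)}$ is then reorganized as follows: $\eps\alpha_0$ combines with $\eps b$ in~\eqref{eq43} to give $\eps\tb$; the even-$\kappa$ pieces $\eps\kappa^{2k}\alpha_{2k}$ with $1\leq k\leq(\nu-1)/2$, together with all their higher-order companions (from the $\sigma,\delta$ components and from the even-in-$\kappa$ contributions arising in $T^{(n)}$ for $n\geq 1$), form $\eps\kappa^{2}g(z,\eps)$; the leading odd term $\eps\kappa^{\nu}\alpha_{\nu}=\eps\kappa^{\nu}\beta_{\nu}$ (with $\beta_\nu$ from~\eqref{eq427}) is isolated as the resonance-width main term, while the next two odd powers $\eps\kappa^{\nu+2}\alpha_{\nu+2}$ and $\eps\kappa^{\nu+4}\alpha_{\nu+4}$, together with their $\eps^{1/2}$-companions up to the required precision, are packaged into $\eps\kappa^{\nu}[\kappa^{2}a_{2}(\eps)+\kappa^{4}a_{4}(\eps)]$. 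All remaining contributions of half-integer order in $\eps$---namely the $\eps^{3/2}\sigma_j$ and $\eps^{2}\delta_j$ parts of $T^{(0)}$ and the entire contribution of each $T^{(n)}$ with $n\geq 1$, which starts at $\eps^{3/2}$---are collected into $\eps^{3/2}f(z,\eps)$. The polynomial form of $f,g$ in $\kappa^{2}$ and $\eps^{1/2}$, and the reality of $f_0,g_0$, follow respectively from the vanishing of odd-indexed $G_j$ for $j<\nu$ and the real-valuedness of $\psi_0$.

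The main obstacle is the combinatorial bookkeeping required to verify, uniformly over $z\in\Deps$, that every contribution not explicitly present in $H(z,\eps)$ is of size at most $C\eps^{(\nu+8)/2}$. Four sources of error must be controlled simultaneously: the tail of the Neumann series for $n\geq\nu+6$; the $\kappa^{N+1}\widehat{E}_N$-remainder in~\eqref{Geps-expand} for each retained $T^{(n)}$, whose truncation order $N$ has to be chosen in tandem with $n$; the terms $\eps\kappa^{\nu+2k}\alpha_{\nu+2k}$ with $k\geq 3$ in $T^{(0)}$, each of size $\eps^{\nu/2+k+1}\leq\eps^{(\nu+8)/2}$; and the dependence on $\im z$, which on $\Deps$ is bounded by $(\tb\eps)^{(\nu+8)/2}$. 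The exponent $r(\nu)=(\nu+8)/2$ in the definition~\eqref{Deps} was chosen precisely so that the $\im z$-dependent perturbations of $\kappa=-i\sqrt{z}$ and of the various expansion coefficients fit within the desired error (the correction to $\kappa^\nu$ being of size $|\im z|/|\kappa|\cdot|\kappa|^{\nu-1}\cdot\eps\lesssim \eps^{(\nu+8)/2}$). Once these points are verified, the stated form~\eqref{eq425} of $H(z,\eps)$ and the claimed bound on $r(z,\eps)$ follow by collecting terms.
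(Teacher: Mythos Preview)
Your approach is essentially the one the paper takes: expand the SLFG inner product via the Neumann series for $[\cU+G_\eps]^{-1}$, insert the resolvent expansion of Proposition~\ref{prop42}, and sort the resulting monomials in $\eps^{1/2}$ and $\kappa$. The paper organizes the same computation by first splitting off $\Esf_1=-\langle\psi_0,\We(\sum\kappa^j G_j)\We\psi_0\rangle$ (your $T^{(0)}$) and treating $\Esf_2$ separately, but the content is identical.

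Two points in your write-up deserve tightening. First, the remark that ``each $Q_0(H_0-z)^{-1}Q_0$ factor is $\cO(1)$ in the relevant weighted-Sobolev topology'' is not quite right as stated: $Q_0$ need not preserve $H^{-1,s}$ for the large $s$ required, which is exactly why the paper routes everything through the auxiliary space $\cK$ and the operators $G_\eps$. Your argument actually uses only $\|G_\eps\|_{\cB(\cK)}=\cO(\eps^{1/2})$, so rephrase accordingly. Second, your allocation of terms to $f$, $g$, and the odd-$\kappa$ bracket is internally inconsistent: you first place the $\sigma_j,\delta_j$ companions and the even-$\kappa$ parts of $T^{(n)}$, $n\geq1$, into $\eps\kappa^2 g$, then two lines later send ``the entire contribution of each $T^{(n)}$ with $n\geq1$'' into $\eps^{3/2}f$. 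Moreover, $T^{(n)}$ for $n\geq1$ produces odd-$\kappa$ monomials of the form $\eps^{(n+2)/2}\kappa^{\nu+2k}$ (one factor $G_{\nu+2k}$ and the rest even), and these cannot sit in $f$ or $g$ since both are polynomials in $\kappa^2$; they must be absorbed into the $\eps^{1/2}$-dependence of $a_2(\eps)$, $a_4(\eps)$ (and, at order $\eps^{3/2}\kappa^\nu$, effectively into an $\eps^{1/2}$-correction of the coefficient of $\kappa^\nu$). Once you make this split explicit the bookkeeping closes and the stated error $\cO(\eps^{(\nu+8)/2})$ follows.
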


To prove these lemmas we need the following result.

\medskip

\begin{lem}\label{lemma47}
For sufficiently small $\eps$ and $z\in \Deps$ the operator
$\cU+G_{\eps}(\kappa)$ is invertible in $\cB(\cK)$ with
$\norm{[\cU+G_{\eps}(\kappa)]^{-1}}$
uniformly bounded in $z$ and $\eps$. An asymptotic expansion of $[\cU+G_{\eps}(\kappa)]^{-1}$ can be obtained using the finite Neumann expansion.
\end{lem}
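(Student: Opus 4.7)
The backbone of the proof is the norm-level expansion from Proposition~\ref{prop42}: on $\{|\kappa|\leq 1\}$ every coefficient $\widetilde{G}_{\eps,j}$ and the remainder $\widehat{E}_N(\kappa,\eps)$ have $\cB(\cK)$-norm bounded by $C\eps^{1/2}$. On $\Deps$ the geometry $\tfrac12\tb\eps<\re z<\tfrac32\tb\eps$ forces $|\kappa|=|z|^{1/2}$ to be of order $\eps^{1/2}$, so every term $\kappa^j\widetilde{G}_{\eps,j}$ with $j\geq 0$ and the remainder contribute at most $O(\eps^{1/2})$. The only piece which is not automatically small is the singular term $\kappa^{-1}\widetilde{G}_{\eps,-1}=c_0\,\kappa^{-1}\we P_0\wes$, and the plan is to treat the two cases of Assumption~\ref{nu-assump} separately.

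\textbf{Case $\nu\geq 1$.} Assumption~\ref{nu-assump} then forces $G_{-1}=0$, so $\widetilde{G}_{\eps,-1}=\we G_{-1}\wes=0$ and $\|G_\eps(\kappa)\|_{\cB(\cK)}\leq C\eps^{1/2}$ uniformly on $\Deps$. Rewriting $\cU+G_\eps(\kappa)=\cU\bigl(I+\cU G_\eps(\kappa)\bigr)$ and using $\|\cU\|=1$, for $\eps$ small enough $\|\cU G_\eps(\kappa)\|\leq\tfrac12$, and a convergent Neumann series yields invertibility, a uniform bound, and (by truncation at any order) the asserted finite Neumann expansion.

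\textbf{Case $\nu=-1$.} The singular piece $\kappa^{-1}\widetilde{G}_{\eps,-1}$ is rank one and of $\cB(\cK)$-norm $O(1)$, so it cannot be absorbed by a naive Neumann series against $\cU$. I split
\begin{equation*}
\cU+G_\eps(\kappa)=\bigl[\cU+c_0\kappa^{-1}\we P_0\wes\bigr]+R_\eps(\kappa),\qquad \|R_\eps(\kappa)\|_{\cB(\cK)}=O(\eps^{1/2}),
\end{equation*}
and invert the bracketed rank-one update of the involution $\cU$ by Sherman--Morrison:
\begin{equation*}
\bigl[\cU+c_0\kappa^{-1}\we P_0\wes\bigr]^{-1}=\cU-\frac{c_0\kappa^{-1}\,\cU\we P_0\wes\cU}{1+c_0\kappa^{-1}\langle\we\psi_0,\cU\we\psi_0\rangle_\cK}.
\end{equation*}
The factorization~\eqref{factor} combined with Lemma~\ref{lemma23} identifies $\langle\we\psi_0,\cU\we\psi_0\rangle_\cK=\langle\psi_0,\We\psi_0\rangle=\eps b$, so the denominator equals $(\kappa+c_0\eps b)/\kappa$. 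For $z=x+i\eta\in\Deps$ with $|\eta|<(\tb\eps)^2$, writing $\sqrt{z}=u+iv$ gives $|u|\asymp\sqrt{x}\asymp\eps^{1/2}$ and $v=O(\eps^{3/2})$, hence $|\im\kappa|=|u|\geq c\eps^{1/2}$ while $c_0\eps b$ is real of order $\eps$; therefore $|\kappa+c_0\eps b|\geq|\im\kappa|\geq c\eps^{1/2}$ and $|\kappa|\leq C\eps^{1/2}$, so the denominator is bounded below by a positive constant uniformly on $\Deps$. A second Neumann series then absorbs $R_\eps(\kappa)$, and inserting the expansion of $G_\eps$ from Proposition~\ref{prop42} into the truncated Neumann series produces the finite Neumann expansion of $(\cU+G_\eps(\kappa))^{-1}$.

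The hard part is precisely the $\nu=-1$ case, where the leading singular term is of order one on $\Deps$ and a direct Neumann series fails. What rescues the argument is the rank-one structure of $\widetilde{G}_{\eps,-1}$, the explicit identity $\langle\we\psi_0,\cU\we\psi_0\rangle_\cK=\eps b$ provided by Lemma~\ref{lemma23}, and the geometry of $\Deps$ that forces $|\im\kappa|$ to be comparable to $\eps^{1/2}$; these three ingredients together keep $\kappa+c_0\eps b$ safely away from zero and produce the uniform invertibility.
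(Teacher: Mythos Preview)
Your proof is correct but follows a genuinely different route from the paper's in the hard case $\nu=-1$. You invert the rank-one update $\cU+c_0\kappa^{-1}\we P_0\wes$ via Sherman--Morrison and control the scalar denominator $1+c_0\kappa^{-1}\eps b$ using the location of $\kappa$ on $\Deps$. The paper instead sets $X=\kappa^{-1}\cU\we G_{-1}\wes$, notes that $\cU G_\eps(\kappa)=X+\cO(\eps^{1/2})$, and observes that although $\|X\|$ is only $O(1)$ on $\Deps$, one has
\[
X^2=\kappa^{-2}c_0^2\,\cU\we P_0\We P_0\wes,
\]
and since $P_0YP_0=0$ (Lemma~\ref{lemma23}) the $\eps^{1/2}$-part of $\We$ drops out, yielding $\|X^2\|\leq C\eps^{1/2}$. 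Then $(I+X)^{-1}=(I-X)(I-X^2)^{-1}$ and two Neumann series finish the job. Both arguments exploit the same cancellation from Lemma~\ref{lemma23}: yours through $\langle\we\psi_0,\cU\we\psi_0\rangle_\cK=\langle\psi_0,\We\psi_0\rangle=\eps b$ (no $\eps^{1/2}$ term), the paper's through $P_0\We P_0=\eps\, bP_0$. The paper's $X^2$-trick is purely algebraic and never needs to locate $\kappa$ in the complex plane; your approach is more explicit and gives the inverse of the singular piece in closed form. As a minor simplification, your detour via $|\im\kappa|$ is not needed: since $|c_0\eps b|=O(\eps)$ while $|\kappa|\geq c\eps^{1/2}$ on $\Deps$, the reverse triangle inequality already gives $|\kappa+c_0\eps b|\geq c'\eps^{1/2}$ for small $\eps$.
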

\begin{proof}
We give the proof in the case $\nu=-1$. The arguments in the other cases
are straightforward.
Write $\cU+G_{\eps}(\kappa)=\cU(I+\cU G_{\eps}(\kappa))$ and define
$X=\kappa^{-1}\cU\we G_{-1}\wes$. Using Proposition~\ref{prop42} we have
\begin{equation*}
\cU G_{\eps}(\kappa)=X+\cO(\eps^{1/2}).
\end{equation*}
For $z\in \Deps$ we only get a uniform bound on $X$, but we need decay in $\eps$. Using  Lemma~\ref{lemma23} we obtain
\begin{equation*}
X^2=\kappa^{-2}\cU\we G_{-1}\We G_{-1}\wes=
\kappa^{-2}c_0^2\cU\we P_0(\eps^{1/2}Y+\eps Z) P_0\wes
=\eps\kappa^{-2}c_0^2\cU\we P_0ZP_0\wes,
\end{equation*}
where $c_0$ is given in \eqref{eq32}. It follows that $\norm{X^2}_{\cB(\cK)}\leq C\eps^{1/2}$ for $z\in \Deps$.

Thus for $\eps$ sufficiently small $(I-X^2)^{-1}$ exists and can be computed using the Neumann series. Then we have
\begin{equation*}
(I+X)^{-1}=(I-X)(I-X^2)^{-1}.
\end{equation*}
This implies $(I+X)^{-1}=I-X+\cO(\eps^{1/2})$. Thus we have
\begin{equation*}
(I+\cU G_{\eps}(\kappa))^{-1}=(I+X)^{-1}\bigl(I+\cO(\eps^{1/2})(I+X)^{-1}\bigr)^{-1}.
\end{equation*}
The results follow from this expression and the Neumann series.
\end{proof}

\begin{proof}[\bf Proof of Lemma~\ref{lemma44}]
The starting point is \eqref{eq43}. The last term in \eqref{eq43} is split as follows
\begin{align}
&\ip{\psi_0}{\wes\cU\{G_{\eps}(z)-G_{\eps}(z)
[\cU+G_{\eps}(z)]^{-1}G_{\eps}(z)\}\cU\we\psi_0}
=\ip{\psi_0}{\wes\cU G_{\eps}(z)\cU \we\psi_0}
\label{term1}
\\[6pt]
&\qquad\qquad\qquad\qquad\qquad-\ip{\psi_0}{\wes\cU G_{\eps}(z)
[\cU+G_{\eps}(z)]^{-1}G_{\eps}(z)\cU\we\psi_0}
\label{term2}
\end{align}

We start by analyzing the term on the right hand side of \eqref{term1}. Rewrite, using Proposition~\ref{prop42}, as follows:
\begin{align}\label{expand_first}
\ip{\psi_0}{\wes\cU G_{\eps}(z)\cU \we\psi_0}
&=\sum_{j=-1}^N\kappa^j\ip{\psi_0}{\We G_j \We\psi_0}
\kappa^{N+1}\ip{\psi_0}{\We\widetilde{G}_N(\kappa,\eps)\We\psi_0}.
\end{align}

We can rewrite the $j=-1$ term using \eqref{YZdef}:
\begin{align}
\ip{\psi_0}{\We G_{-1}\We\psi_0}
&=\ip{\psi_0}{(\eps^{1/2}Y+\eps Z) G_{-1}(\eps^{1/2}Y+\eps Z)\psi_0}
\notag\\[5pt]
&=\eps\ip{\psi_0}{YG_{-1}Y\psi_0}+
\eps^{3/2}\ip{\psi_0}{(ZG_{-1}Y+YG_{-1}Z)\psi_0}
\notag\\[5pt]
&\quad+\eps^2\ip{\psi_0}{ZG_{-1}Z\psi_0}.
\label{last}
\end{align}

Combining $G_{-1}=c_0P_0$ with Lemma~\ref{lemma23} we find that the only nonzero term on the right hand side is the last one. Thus we have shown that
\begin{equation}
\ip{\psi_0}{\We G_{-1}\We\psi_0}=c_0b^2\eps^2,
\end{equation}
using $b=\ip{\psi_0}{\abs{A}^2\psi_0}=\ip{\psi_0}{Z\psi_0}$ and $c_0$, defined in \eqref{eq32}.
Taking into account the $\kappa^{-1}$ factor and $z\in \Deps$ it follows that the $j=-1$ contribution to the sum in~\eqref{expand_first} is of order $\eps^{3/2}$.

Next we analyze the $j=0$ term in the expansion~\eqref{expand_first}. Proceeding as above we get
\begin{align}
\ip{\psi_0}{\We G_{0}\We\psi_0}&=\eps\ip{\psi_0}{YG_0Y\psi_0}
+\eps^{3/2}\bigl(
\ip{\psi_0}{YG_0Z\psi_0} + \ip{\psi_0}{ZG_0Y\psi_0}\bigr)
\notag
\\
&\quad+\eps^2\ip{\psi_0}{ZG_0Z\psi_0}.
\end{align}
Thus we have a term of order $\eps$ which must be taken together with the first term in \eqref{eq43}.

The terms in \eqref{expand_first} with $1\leq j\leq N$ are of order $\eps^{(2+j)/2}$. The error term is of order $\eps^{(N+3)/2}$.

Next we use Lemma~\ref{lemma47} to deal with the term \eqref{term2}.
We rewrite it as follows:
\begin{align*}
\ip{\psi_0}{\wes\cU G_{\eps}(z)
[\cU+G_{\eps}(z)]^{-1}G_{\eps}(z)\cU\we\psi_0}&=
\ip{\psi_0}{\wes\cU G_{\eps}(z)\cU G_{\eps}(z)\cU\we\psi_0}
\\
&\quad-\ip{\psi_0}{\wes\cU G_{\eps}(z)\cU G_{\eps}(z)\cU G_{\eps}(z)\cU\we\psi_0}
\\
&\quad+\mathcal{O}(\eps^2)
\end{align*}
Consider the first term on the right hand side and apply the expansion in Proposition~\ref{prop42}.
\begin{align*}
\ip{\psi_0}{\wes\cU G_{\eps}(z)\cU G_{\eps}(z)\cU\we\psi_0}
&=
\kappa^{-2}\ip{\psi_0}{\We G_{-1}\We G_{-1}\We\psi_0}
\\
&\ +\kappa^{-1}\bigl(\ip{\psi_0}{\We G_{-1}\We G_{0}\We\psi_0}
%\\
%&\qquad
+\ip{\psi_0}{\We G_{0}\We G_{-1}\We\psi_0}\bigr)
\\
&\ +\kappa^0A_0(\eps)  +\cO(\kappa\eps^2),
\end{align*}
where
\begin{equation*}
A_0(\eps)= \ip{\psi_0}{\We G_{0}\We G_{0}\We\psi_0}+\ip{\psi_0}{\We G_{-1}\We G_1\We\psi_0}+\ip{\psi_0}{\We G_{1}\We G_{-1}\We\psi_0}\, .
\end{equation*}
The $\kappa^{-2}$ term is simplified using \eqref{YZdef} and Lemma~\ref{lemma23}.
The result is (recall $b=\ip{\psi_0}{Z\psi_0}$)
\begin{equation*}
\kappa^{-2}\ip{\psi_0}{\We G_{-1}\We G_{-1}\We\psi_0}=
\kappa^{-2}c_0^2b^3\eps^3.
\end{equation*}
In a similar manner the $\kappa^{-1}$ terms are rewritten as
\begin{align*}
\kappa^{-1}c_0b\bigl[2\eps^2\ip{\psi_0}{YG_0Y\psi_0}
+\eps^{5/2}\bigl(\ip{\psi_0}{YG_0Z\psi_0}&+\ip{\psi_0}{ZG_0Y\psi_0}\bigr)
+2\eps^3\ip{\psi_0}{ZG_0Z\psi_0}\bigr].
\end{align*}
Finally we also rewrite the $\kappa^{0}$ term as follows;
\begin{align*}
A_0(\eps) &=
\eps^{3/2} \ip{\psi_0}{YG_0YG_0Y\psi_0}
+ \eps^2\bigl(\ip{\psi_0}{ZG_0YG_0Y\psi_0}
+\ip{\psi_0}{YG_0ZG_0Y\psi_0}
\\
&\ +\ip{\psi_0}{YG_0YG_0Z\psi_0}\bigr) +\eps^{5/2}\bigl(
\ip{\psi_0}{ZG_0ZG_0Y\psi_0}
+\ip{\psi_0}{YG_0ZG_0Z\psi_0}
\\
&\ +\ip{\psi_0}{ZG_0YG_0Z\psi_0}\bigr)+\eps^3\ip{\psi_0}{ZG_0ZG_0Z\psi_0}
\\
&\ +2\eps c_0 b\bigl[
\eps\ip{\psi_0}{YG_1Y\psi_0} + \eps^{3/2}(\ip{\psi_0}{YG_1Z\psi_0}
+\ip{\psi_0}{ZG_1Y\psi_0}
\bigr)
+\eps^2\ip{\psi_0}{ZG_1Z\psi_0}
\bigr].
\end{align*}
The error term $\cO(\kappa\eps^2)$ follows by analyzing the structure of subsequent terms in the expansion.

The results in the lemma are a reorganized version of the above fairly detailed computations.
\end{proof}

\begin{proof}[\bf Proof of Lemma~\ref{lemma45}]
The proof is very similar to the proof of Lemma~\ref{lemma44}.
We omit the details.
\end{proof}
\begin{proof}[\bf Proof of Lemma~\ref{lemma46}]
We assume that $\nu\geq3$ is an odd integer such that Assumption~\ref{nu-assump} holds. Let us analyze how to get the right decomposition. The goal is to get an error term of order $\eps^{(\nu+8)/2}$.

We write
\begin{equation}
F(z,\eps)=\eps b-z+\Esf_1+\Esf_2,
\end{equation}
where
\begin{equation}
\Esf_1=-\ip{\psi_0}{\wes\cU G_{\eps}(z)\cU\we\psi_0}
\end{equation}
and
\begin{equation}
\Esf_2=\ip{\psi_0}{\wes\cU G_{\eps}(z)
[\cU+G_{\eps}(z)]^{-1}G_{\eps}(z)\cU\we\psi_0}.
\end{equation}
We have
\begin{equation} \label{E1-exp}
\Esf_1=-\ip{\psi_0}{\We\Bigl(
\sum_{j=0}^N\kappa^jG_j\Bigr)\We\psi_0}
+\kappa^{N+1}\ip{\psi_0}{\We \widetilde{G}_N(\kappa)\We\psi_0}
\end{equation}
The second term should go into the error term $r(z,\eps)$. Each $\We$  contributes $\eps^{1/2}$ and $\kappa\sim\eps^{1/2}$. Thus
\begin{equation*}
\kappa^{N+1}\eps\sim \eps^{1+(N+1)/2}.
\end{equation*}
Then $1+(N+1)/2=(\nu+8)/2$ shows that we must take $N=\nu+5$ in \eqref{E1-exp}.

Due to the assumption on the $G_j$ we split the sum above as
\begin{equation}\label{even-odd1}
\sum_{j=0}^N\kappa^jG_j=\sum_{k=0}^{(\nu+5)/2}\kappa^{2k}G_{2k}
+\kappa\sum_{l=(\nu-1)/2}^{(\nu+3)/2}\kappa^{2l}G_{2l+1}.
\end{equation}
Due to the two $\We$ factors we get the powers $\eps$, $\eps^{3/2}$ and
$\eps^2$. Multiplying out terms and reordering them we find
\begin{align*}
\kappa\!\!\!\sum_{l=(\nu-1)/2}^{(\nu+3)/2}\kappa^{2l} \ip{\psi_0}{\We\, G_{2l+1}\,  \We\psi_0}
&= \kappa^{\nu}\eps\bigl[
(c_{1,\nu}+\kappa^2c_{1,\nu+2}+\kappa^4c_{1,\nu+4})
\\
&
\quad +\eps^{\frac12}(c_{\frac32,\nu}+\kappa^2c_{\frac32,\nu+2}
+\kappa^4c_{\frac32,\nu+4})+\eps(c_{2,\nu}+\kappa^2c_{2,\nu+2})
\bigr]
\end{align*}
where one term was absorbed in the error term. We use the notation $c_{\alpha,\beta}$ for the coefficient to $\eps^{\alpha}\kappa^{\beta}$ in the expansion.

A similar computation for the even terms in \eqref{even-odd1} yields the following contribution:
\begin{equation*}
\sum_{k=0}^{(\nu+5)/2}\kappa^{2k} \ip{\psi_0}{\We\, G_{2k}\,  \We\psi_0} =
\eps\bigl[
\sum_{k=0}^{(\nu+3)/2}\kappa^{2k}\bigl(c_{1,2k}+
\eps^{1/2}c_{\frac32,2k}+\eps c_{2,2k}\bigr)
+\kappa^{\nu+5}c_{1,\nu+5}
\bigr].
\end{equation*}	
This finishes our analysis of $\Esf_1$.

Next we need to analyze the contributions from $\Esf_2$. For the term
$[\cU+G_{\eps}(z)]^{-1}$ we use Lemma~\ref{lemma47}. We omit the details of the computation.  We get a contribution with the following structure:

\begin{equation*}
\Esf_2 = \sum_{j=2}^{\nu+6}\, \sum_{k=j+1}^{\min\{2(j+1),\nu+7\}}
 \Biggl(
\sum_{\substack{l=0\\ \text{$l$ even}}}^{\nu+7-k}\eps^{k/2}\kappa^l d_{j,k,l}
+\sum_{\substack{l=\nu\\ \text{$l$ odd}}}^{\nu+7-k}
\eps^{k/2}\kappa^l d_{j,k,l}
\Biggr) + \mathcal{O}(\eps^{(\nu+8)/2})   \, .
\end{equation*}
For the sum over $l$ odd we use the standard convention
that $\sum_{l=m}^n\cdots=0$ for $n<m$.

Collecting and rearranging the terms we get the results stated in the lemma.
\end{proof}
%%%%%%%%%%%%%%%%%%%%%%%%%%%%%%%%%%%%%%%%%%%%%%%%%%%%%%%%%%%%%%%%%%%%%%%%
\section{Main results}
\label{sec-main}
In this section we use the results from Lemmas~\ref{lemma44}--\ref{lemma46} to prove the main results. We start with the analysis of the term $H(z,\eps)$.
%We assume $\nu\geq3$.
 It is clear that the limits
\begin{equation*}
H_{\pm}(x,\eps)=\lim_{\eta\downarrow0}H(x\pm i\eta,\eps)
\end{equation*}
exist. We have a decomposition in real and imaginary parts that can be written as
\begin{equation*}
H_{\pm}(x,\eps)=R(x,\eps)\pm i I(x,\eps).
\end{equation*}
Recall that $H(z,\eps)$ is given, for $\nu=-1, \nu=1$ and $\nu\geq 3$,  by \eqref{eq412}, \eqref{eq417} and \eqref{eq425} respectively.
In either case we obtain
\begin{equation}
R(x,\eps)=\tb\eps -x +  \varphi_\nu(x,\eps),
% \eps^{3/2}f(x,\eps)+\eps x g(x,\eps).
\end{equation}
where $\varphi_\nu$ is a real-valued function of $x$ which satisfies
\begin{equation} \label{fi-nu}
\sup_{x\in (\frac12\tb\eps,\frac32\tb\eps)} |\varphi_\nu| = \mathcal{O}(\eps^{3/2}) , \qquad \sup_{x\in (\frac12\tb\eps,\frac32\tb\eps)} |\partial_x \varphi_\nu| = \mathcal{O}(\eps), \qquad \nu=-1,1,3\dots
\end{equation}

It follows that for $\eps$ sufficiently small we have
\begin{equation*}
R(\tfrac12\tb\eps)<0\quad\text{and}\quad R(\tfrac32\tb\eps)>0.
\end{equation*}
We also have
\begin{equation*}
\frac{dR}{dx}(x,\eps)=-1+\cO(\eps),\quad\eps\to0.
\end{equation*}
Thus there exists a unique $x_0(\eps)\in(\frac12\tb\eps,\frac32\tb\eps)$ such that
\begin{equation} \label{R-x0}
R(x_0(\eps),\eps)=0\quad\text{and}\quad x_0(\eps)=\tb\eps+\cO(\eps^{3/2}).
\end{equation}

Define
\begin{equation}
\Gamma(\eps)=-I(\xe,\eps).
\end{equation}
Due to Assumption~\ref{nu-assump} and the fact that
\begin{equation*}
I(x,\eps)=i^{\nu-1}x^{\nu/2}\eps\,  \beta_\nu +\cO(\eps^{2+\frac\nu2}) ,
\end{equation*}
see \eqref{eq427},
it follows from \cite[(3.15)]{JN} and Assumption~\ref{nu-assump} that $\Gamma(\eps)>0$. More precisely, the coefficients $G_j$ in the expansion \eqref{eq31} are selfadjoint, and $\im R(z)\geq0$ for $\im z\neq0$. Note also that due to the assumption
$\ip{\psi_0}{YG_{\nu}Y\psi_0}\neq0$ there exist $0<c_1<c_2$ such that
\begin{equation}
\begin{aligned}\label{Ibound}
c_1\, \eps^{3/2} & \leq \abs{I(x,\eps)}\leq c_2\, \eps^{3/2} \qquad\quad\ \nu =-1, 1\\[5pt]
c_1\, \eps^{(\nu+2)/2} & \leq \abs{I(x,\eps)}\leq c_2\,  \eps^{(\nu+2)/2} \qquad \nu\geq 3
\end{aligned}
\end{equation}
for all $x\in(\frac12\tb\eps,\frac32\tb\eps)$ and $\eps$ sufficiently small.

We need to fix a small interval $\Ie\subset(\frac12\tb\eps,\frac32\tb\eps)$.
Define
\begin{equation}\label{I-eps}
\Ie=\begin{cases}
[\xe-\frac14\tb\eps,\xe+\frac14\tb\eps], & \nu=-1,1,\\[8pt]
[\xe-\frac{\Gamma(\eps)}{\eps^2},\xe+\frac{\Gamma(\eps)}{\eps^2}], & \nu\geq3.
\end{cases}
\end{equation}
Note that for $\eps$ sufficiently small
$\Ie\subset(\frac12\tb\eps,\frac32\tb\eps)$.

Our goal is to obtain information on
\begin{equation}\label{Ageps}
A_{g_{\eps}}(t)=\ip{\psi_0}{e^{-itH_{\eps}}g_{\eps}(H_{\eps})\psi_0},
\end{equation}
where $g_{\eps}$ is the characteristic function of the interval $I_{\eps}$. Using Stone's formula and the SLFG formula we have
\begin{equation*}
A_{g_{\eps}}(t)=\lim_{\eta\downarrow0}\frac{1}{2\pi i}
\int_{I_{\eps}}e^{-itx}\Bigl(\frac{1}{F(x+i\eta,\eps)}-
\frac{1}{F(x-i\eta,\eps)}\Bigr)dx.
\end{equation*}

\begin{lem}\label{lemma51}
For sufficiently small $\eps$ we have
\begin{equation}
\biggl\lvert
A_{g_{\eps}}(t)-\frac{1}{2\pi i}\int_{I_{\eps}}e^{-itx}
\Bigl(\frac{1}{H_+(x,\eps)}-
\frac{1}{H_-(x,\eps)}\Bigr)dx
\biggr\rvert
\leq C\eps^{q(\nu)}
\end{equation}
with
\begin{equation}
q(\nu)=\begin{cases}
\frac12, & \nu=-1,\\[5pt]
\frac52, & \nu=1,\\[5pt]
3, & \nu\geq3.
\end{cases}
\end{equation}
\end{lem}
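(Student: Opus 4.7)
My approach would begin from Stone's formula applied to the SLFG representation, which yields
\[
A_{g_\eps}(t)=\lim_{\eta\downarrow0}\frac{1}{2\pi i}\int_{I_\eps}e^{-itx}\!\left(\frac{1}{F(x+i\eta,\eps)}-\frac{1}{F(x-i\eta,\eps)}\right)dx,
\]
together with the algebraic identity $1/F-1/H=-r/(FH)$, where the decomposition $F=H+r$ is the content of Lemmas~\ref{lemma44}--\ref{lemma46}. The first step would be to pass to boundary values. The explicit polynomial structure of $H(z,\eps)$ in $\kappa=-i\sqrt{z}$ visible in \eqref{eq412}, \eqref{eq417}, \eqref{eq425} makes the limits $H_\pm(x,\eps)=\lim_{\eta\downarrow0}H(x\pm i\eta,\eps)$ immediate, and the uniform bounds $\abs{r(z,\eps)}\leq C\eps^{s(\nu)}$ on $\Deps$---with $s(-1)=2$, $s(1)=4$, and $s(\nu)=(\nu+8)/2$ for $\nu\geq3$---transfer to boundary values $r_\pm$ by continuity.

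Next I would establish size estimates on the denominators. Writing $H_\pm=R\pm iI$, the bound $\abs{H_\pm}\geq\abs{I}$ combined with \eqref{Ibound} gives $\abs{H_\pm(x,\eps)}\geq c\eps^{3/2}$ when $\nu\in\{-1,1\}$ and $\abs{H_\pm(x,\eps)}\geq c\eps^{(\nu+2)/2}$ when $\nu\geq3$, uniformly on the enclosing interval $(\tfrac12\tb\eps,\tfrac32\tb\eps)\supset I_\eps$. In every case $s(\nu)$ strictly exceeds the exponent in the $|H_\pm|$-bound, so for $\eps$ sufficiently small one has $\abs{r_\pm}\leq\tfrac12\abs{H_\pm}$, hence $\abs{F_\pm}\geq\tfrac12\abs{H_\pm}$ on $I_\eps$, and therefore
\[
\bigg|\frac{1}{F_\pm(x,\eps)}-\frac{1}{H_\pm(x,\eps)}\bigg|\leq\frac{2\abs{r_\pm}}{\abs{H_\pm}^2}\leq\frac{C\eps^{s(\nu)}}{R(x,\eps)^2+I(x,\eps)^2}.
\]

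The final step would be the integration over $I_\eps$. Since $\partial_xR(x,\eps)=-1+O(\eps)$ on $I_\eps$ by \eqref{fi-nu}, the substitution $u=R(x,\eps)$ is a diffeomorphism with Jacobian $1+O(\eps)$, and the lower bound on $\abs{I}$ survives the change of variable. Extending the $u$-range to all of $\mathbb{R}$ yields
\[
\int_{I_\eps}\frac{dx}{R^2+I^2}\leq C\int_{\mathbb{R}}\frac{du}{u^2+(\min_{I_\eps}\abs{I})^2}\leq\frac{C}{\min_{I_\eps}\abs{I}}.
\]
Multiplying by $C\eps^{s(\nu)}$ gives an error of size $\eps^{s(\nu)-3/2}$ for $\nu\in\{-1,1\}$ and $\eps^{s(\nu)-(\nu+2)/2}$ for $\nu\geq3$, which evaluates to $\eps^{1/2}$, $\eps^{5/2}$, and $\eps^3$ respectively---exactly $q(\nu)$.

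The principal technical point I foresee is justifying the interchange of $\lim_{\eta\downarrow0}$ with the integral over $I_\eps$. This reduces to a dominated-convergence argument requiring a uniform lower bound $\abs{F(x\pm i\eta,\eps)}\geq c\eps^{3/2}$ (respectively $c\eps^{(\nu+2)/2}$) on $I_\eps$ uniformly for small $\eta>0$. Such a bound follows from the same decomposition $F=H+r$ applied at $\eta>0$, using that $\abs{H(x\pm i\eta,\eps)}\geq\abs{\im H(x\pm i\eta,\eps)}\geq\abs{I(x,\eps)}-O(\eta)$ together with the uniform bound on $\abs{r}$ valid throughout $\Deps$. Once this step is secured, the rest of the argument is essentially elementary bookkeeping of powers of $\eps$.
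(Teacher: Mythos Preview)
Your proposal is correct and follows essentially the same route as the paper: decompose $F=H+r$, use the smallness of $r$ relative to $\abs{H_\pm}\geq\abs{I}$ from \eqref{Ibound} to get $\abs{F}\geq\tfrac12\abs{H}$, and then control $\int_{I_\eps}\abs{H_\pm}^{-2}\,dx$. The only cosmetic difference is in that last integral: you perform the substitution $u=R(x,\eps)$ directly (legitimate by \eqref{fi-nu}), whereas the paper first records the pointwise lower bound $\abs{H_\pm(x,\eps)}^2\geq C\bigl(\abs{x-x_0(\eps)}^2+\eps^{\nu+2}\bigr)$ (for $\nu\geq3$; with $\eps^3$ for $\nu\in\{-1,1\}$) and then integrates explicitly---the Appendix derivation of that bound is exactly your Jacobian observation $\partial_xR=-1+O(\eps)$ in disguise. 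Your handling of the $\eta\downarrow0$ limit is also marginally more careful than the paper's, which simply bounds the difference for $\eta>0$ and then passes to the limit without further comment.
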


\begin{proof}
Assume first that  $\nu\geq3$. Lemma~\ref{lemma46} implies that
\begin{equation*}
\abs{F(z,\eps)}\geq\abs{H(z,\eps)}-C\eps^{(\nu+8)/2}
\end{equation*}
for $\eps$ sufficiently small and $z\in\Deps$. Then the estimate \eqref{Ibound} implies that we get
\begin{equation*}
\abs{F(z,\eps)}\geq\frac12\abs{H(z,\eps)}
\end{equation*}
for $\eps$ sufficiently small.
Then
\begin{equation}
\biggl\lvert
\frac{1}{2\pi i}
\int_{I_{\eps}}e^{-itx}\Bigl(\frac{1}{F(x\pm i\eta,\eps)}-
\frac{1}{H(x\pm i\eta,\eps)}\Bigr)dx
\biggr\rvert
\leq C\eps^{(\nu+8)/2}\int_{I_{\eps}}\frac{1}{\abs{H(x\pm i\eta,\eps)}^2}\ dx.
\end{equation}
We can take the limit $\eta\downarrow0$ to get
\begin{equation} \label{H-L-1}
\biggl\lvert\lim_{\eta\downarrow0}
\frac{1}{2\pi i}
\int_{I_{\eps}}e^{-itx}\Bigl(\frac{1}{F(x\pm i\eta,\eps)}-
\frac{1}{H(x\pm i\eta,\eps)}\Bigr)dx
\biggr\rvert
\leq C\eps^{(\nu+8)/2}\int_{I_{\eps}}\frac{dx}{\abs{H_{\pm}(x,\eps)}^2}
\end{equation}
We have
\begin{equation*}
\abs{H_{\pm}(x,\eps)}^2=\abs{R(x,\eps)}^2+\abs{I(x,\eps)}^2
=\abs{R(x,\eps)-R(x_0(\eps),\eps)}^2+\abs{I(x,\eps)}^2.
\end{equation*}
Using the expressions for $R(x,\eps)$ and $I(x,\eps)$, and the estimate \eqref{Ibound}, we get
\begin{equation}\label{Hpm-lower}
\abs{H_{\pm}(x,\eps)}^2\geq C\bigl(\abs{x-x_0(\eps)}^2+\eps^{\nu+2}\bigr)
\end{equation}
for some $C<1$, see Appendix \ref{sec-app} for details.
Then we get
\begin{align*}
\eps^{(\nu+8)/2}\int_{I_{\eps}}\frac{1}{\abs{H_{\pm}(x,\eps)}^2}\ dx&\leq
C\eps^3\int_{-\infty}^{\infty}
\frac{\eps^{(\nu+2)/2}}{(x-x_0(\eps))^2+\eps^{\nu+2}}\ dx
\\[5pt]
&
=C\eps^3\int_{-\infty}^{\infty}\frac{1}{1+x^2}\ dx=C\pi \eps^3.
\end{align*}
Now suppose that  $\nu\in\{-1,1\}$. We proceed as in the case $\nu\geq 3$ and using Lemmas \ref{lemma44} and \ref{lemma45}
we arrive at
\begin{equation} \label{H-L-2}
\biggl\lvert\lim_{\eta\downarrow0}
\frac{1}{2\pi i}
\int_{I_{\eps}}e^{-itx}\Bigl(\frac{1}{F(x\pm i\eta,\eps)}-
\frac{1}{H(x\pm i\eta,\eps)}\Bigr)dx
\biggr\rvert
\leq C\, \eps^{3+\nu}\int_{I_{\eps}}\frac{dx}{\abs{H_{\pm}(x,\eps)}^2}
\end{equation}
In view of \eqref{Ibound} estimate \eqref{Hpm-lower} is now replaced by
\begin{equation}\label{Hpm-lower-2}
\abs{H_{\pm}(x,\eps)}^2\geq C\bigl(\abs{x-x_0(\eps)}^2+\eps^3\bigr)
\end{equation}
From this point we follow the analysis of the case $\nu\geq 3$ line by line and get
$$
 \eps^{3+\nu}\int_{I_{\eps}}\frac{dx}{\abs{H_{\pm}(x,\eps)}^2} \, \leq\, C\,  \eps^{\frac 32+\nu}
$$
as required.
\end{proof}

Define
\begin{equation} \label{L-eq}
L_{\pm}(x,\eps)=-\bigl(x-x_0(\eps)\bigr)\pm iI(x_0(\eps),\eps).
\end{equation}

\begin{lem}\label{lemma52}
We have the following estimates:
\begin{equation}\label{eq511}
\Big\lvert
\frac{d}{dx}H(x,\eps)+1
\Big\rvert
\leq C\begin{cases}
\eps^{1/2}, & \nu=-1,1,\\
\eps, & \nu\geq3.
\end{cases}
\end{equation}
\begin{equation}\label{eq512}
\Big\lvert
\frac{d^2}{dx^2}H(x,\eps)
\Big\rvert
\leq C\begin{cases}
\eps^{-1/2}, & \nu=-1,1,\\
\eps^{1/2}, & \nu=3,\\
\eps, & \nu\geq5.
\end{cases}
\end{equation}
\end{lem}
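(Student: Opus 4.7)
The plan is to differentiate the explicit representations of $H(z,\eps)$ from Lemmas~\ref{lemma44}--\ref{lemma46} term by term, tracking only the exponents of $\eps$ and $\kappa$, and then to exploit the fact that $z\in \Deps$ forces $|\kappa|\sim\eps^{1/2}$.

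First I would record the basic chain-rule computation. Since $\kappa^2=-z$, one has
\begin{equation*}
\frac{d\kappa}{dx}=-\frac{1}{2\kappa},\qquad \frac{d^2\kappa}{dx^2}=-\frac{1}{4\kappa^3},
\end{equation*}
so each monomial $\eps^\alpha \kappa^\beta$ satisfies
\begin{equation*}
\Bigl|\frac{d}{dx}(\eps^\alpha \kappa^\beta)\Bigr|\lesssim \eps^{\alpha+(\beta-2)/2},\qquad \Bigl|\frac{d^2}{dx^2}(\eps^\alpha\kappa^\beta)\Bigr|\lesssim \eps^{\alpha+(\beta-4)/2}
\end{equation*}
uniformly on $\Deps$. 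The only term in $H$ whose derivative contributes the $-1$ in \eqref{eq511} is $-z$; every other contribution is governed by the rule above.

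The three cases then amount to reading off the monomials and taking the maximum. For $\nu=-1$, formula \eqref{eq412} contains a single $x$-dependent term beyond $-z$, namely $-\beta_{-1}\eps^2\kappa^{-1}$ (so $\alpha=2$, $\beta=-1$), which saturates the bounds $\eps^{1/2}$ and $\eps^{-1/2}$. For $\nu=1$, formula \eqref{eq417} has the worst term $-\eps\beta_1\kappa$ (so $\alpha=\beta=1$), producing again $\eps^{1/2}$ and $\eps^{-1/2}$; the remaining monomials $-\eps\kappa^j\alpha_j$ with $j=2,3,4,5$ and the polynomial remainders $\eps^{3/2}f$, $\eps^{3/2}\kappa g$ give strictly smaller contributions via the same rule. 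For $\nu\geq 3$, formula \eqref{eq425} has worst singular term $-\eps\beta_\nu\kappa^\nu$, yielding $O(\eps^{\nu/2})\leq O(\eps^{3/2})$ in the first derivative and $O(\eps^{(\nu-2)/2})$ in the second derivative; competing with the smooth term $\eps\kappa^2 g$, whose second derivative is $O(\eps)$, one obtains the overall bound $\eps$ on the first derivative, the bound $\eps^{1/2}$ at $\nu=3$, and the bound $\eps$ for $\nu\geq 5$ on the second derivative.

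The only nontrivial bookkeeping is to verify that the polynomial remainders $f(z,\eps)$ and $g(z,\eps)$---polynomials in $\kappa^2=-x$ and $\eps^{1/2}$ with bounded real coefficients---respect the asserted bounds. Because $\kappa^2$ is smooth in $x$ on $\Deps$, the derivatives $\frac{df}{dx}$, $\frac{dg}{dx}$, $\frac{d^2f}{dx^2}$, $\frac{d^2g}{dx^2}$ are all $O(1)$ uniformly in $z\in\Deps$, so the outer prefactors $\eps^{3/2}$ (for $f$ and $\kappa g$) and $\eps$ (for $\kappa^2 g$) already provide enough margin. The main obstacle is therefore purely this accounting; no new analytical input beyond the chain rule and the lower bound $|\kappa|\sim \eps^{1/2}$ is required.
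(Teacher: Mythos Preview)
Your proposal is correct and is essentially the paper's own argument: the paper differentiates the explicit formulas \eqref{eq412}, \eqref{eq417}, \eqref{eq425} directly, using $x\sim\eps$ on $\Deps$ and collecting the resulting exponents, then remarks that the second derivative and the remaining cases are treated ``in the same manner.'' Your organization via the monomial rule $\bigl|\tfrac{d^k}{dx^k}(\eps^{\alpha}\kappa^{\beta})\bigr|\lesssim\eps^{\alpha+(\beta-2k)/2}$ is just a clean repackaging of the same computation.
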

\begin{proof}
Assume $\nu\geq3$. Then \eqref{eq425} in combination with \eqref{fi-nu} implies
\begin{align*}
\frac{d}{dx}H(x,\eps)&=-1+\frac{\eps(-i)^{\nu}}{2} \bigl[\beta_{\nu}\, \nu\, x^{\frac{\nu-2}{2}}
-a_2(\eps)(\nu+2)\, x^{\frac \nu2}+a_4(\eps)(\nu+4)x^{\frac{\nu+4}{2}}\bigr]
 +  \mathcal{O}(\eps) .
\end{align*}
This implies \eqref{eq511}.
The second derivative is estimated in the same manner.
The cases $\nu=-1$ and $\nu=1$ follow by using
\eqref{eq412} and \eqref{eq417}, respectively.
\end{proof}

\begin{lem}\label{lemma53}
For sufficiently small $\eps$ we have
\begin{equation}
\biggl\lvert
\int_{I_{\eps}}e^{-itx}\Bigl(\frac{1}{H_{\pm}(x,\eps)}-
\frac{1}{L_{\pm}(x,\eps)}\Bigr)dx
\biggr\rvert\leq C
\begin{cases}
\eps^{1/2}, & \nu=-1,1,\\[5pt]
\eps, & \nu\geq3.
\end{cases}
\end{equation}
\end{lem}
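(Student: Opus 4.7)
My plan is to work from the algebraic identity
\[
\frac{1}{H_\pm(x,\eps)} - \frac{1}{L_\pm(x,\eps)} = \frac{L_\pm(x,\eps) - H_\pm(x,\eps)}{H_\pm(x,\eps)\, L_\pm(x,\eps)},
\]
which reduces the task to three pieces: a pointwise upper bound on the numerator, a pointwise lower bound on the denominator, and a direct integration over $I_\eps$ using $|e^{-itx}|\leq 1$.

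For the numerator I would decompose $L_\pm(x,\eps) - H_\pm(x,\eps) = [-(x - x_0(\eps)) - R(x,\eps)] \pm i[I(x_0(\eps),\eps) - I(x,\eps)]$. Since $R(x_0(\eps),\eps) = 0$, the mean value theorem together with Lemma~\ref{lemma52} (whose bound \eqref{eq511} on $|\tfrac{d}{dx}H + 1|$ controls both $|\tfrac{d}{dx}R + 1|$ and $|\tfrac{d}{dx}I|$) yields
\[
|L_\pm(x,\eps) - H_\pm(x,\eps)| \leq C\,\eps^{\sigma}|x - x_0(\eps)|,\qquad \sigma = \begin{cases} 1/2, & \nu\in\{-1,1\},\\ 1, & \nu\geq3.\end{cases}
\]

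For the denominator, the already established lower bounds \eqref{Hpm-lower}, \eqref{Hpm-lower-2} control $|H_\pm|$, while the analogous bound for $|L_\pm|$ follows immediately from $|L_\pm(x,\eps)|^2 = (x - x_0(\eps))^2 + I(x_0(\eps),\eps)^2$ and the lower bound on $|I(x_0(\eps),\eps)|$ supplied by \eqref{Ibound}. This gives
\[
|H_\pm(x,\eps)\, L_\pm(x,\eps)| \geq C\,(|x - x_0(\eps)|^2 + \eps^{\tau}),\qquad \tau = \begin{cases} 3, & \nu\in\{-1,1\},\\ \nu+2, & \nu\geq3.\end{cases}
\]
Combining with the numerator bound, the integral in question is dominated by $C\eps^\sigma \int_{I_\eps}|x - x_0(\eps)|/(|x - x_0(\eps)|^2 + \eps^\tau)\,dx$, which after the substitution $y = x - x_0(\eps)$ is elementary, with antiderivative $\tfrac12\log(y^2 + \eps^\tau)$.

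The hard part, as I anticipate it, will be squeezing out the clean bound $\eps^\sigma$ rather than the $\eps^\sigma|\log\eps|$ that a naive evaluation of the last integral produces. To remove the logarithm I would split $I_\eps$ at the crossover $|y|\sim\eps^{\tau/2}$: on the inner region the linear bound on $|L_\pm - H_\pm|$ combined with $|H_\pm L_\pm| \geq C\eps^\tau$ gives a contribution of order $\eps^\sigma$, while on the outer region one should feed in the uniform estimates $|\varphi_\nu(x,\eps)|, |I(x,\eps)| \leq C\eps^{3/2}$ (and their analogues for $\nu\geq 3$ coming from \eqref{Ibound}) together with $|H_\pm L_\pm| \geq Cy^2$, carefully orchestrated so that the two contributions add up to the claimed power of $\eps$ without the spurious logarithmic factor.
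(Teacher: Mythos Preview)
Your first-order identity and the resulting bound
\[
\Bigl|\frac{1}{H_\pm}-\frac{1}{L_\pm}\Bigr|\leq C\,\eps^{\sigma}\,\frac{|x-x_0(\eps)|}{|x-x_0(\eps)|^2+\eps^{\tau}}
\]
are correct, and you are right that integrating this yields $\eps^\sigma|\log\eps|$. The trouble is that your proposed remedy cannot close this gap. On the outer region $|y|>\eps^{\tau/2}$ (take $\nu\in\{-1,1\}$, so $\tau=3$) the uniform estimate you invoke gives $|L_\pm-H_\pm|\leq C\eps^{3/2}$, and with $|H_\pm L_\pm|\geq Cy^2$ you obtain
\[
\int_{\eps^{3/2}}^{\ell(\eps)}\frac{\eps^{3/2}}{y^2}\,dy
=\eps^{3/2}\Bigl(\eps^{-3/2}-\ell(\eps)^{-1}\Bigr)\sim 1,
\]
which is far worse than $\eps^{1/2}$. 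More generally, since $\ell(\eps)\sim\eps$ for $\nu=\pm1$ (and $\ell(\eps)\sim\eps^{(\nu-2)/2}$ for $\nu\geq3$), the uniform bound $C\eps^{3/2}$ on $|L_\pm-H_\pm|$ is never better than the linear bound $C\eps^\sigma|y|$ on $I_\eps$: they coincide at the endpoints. No choice of splitting point can therefore beat the logarithm once you have replaced $e^{-itx}$ by its absolute value.

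The missing idea is that the oscillation of $e^{-itx}$ must be exploited. The paper does this by passing to the second-order identity
\[
\frac{1}{H_\pm}-\frac{1}{L_\pm}=\frac{L_\pm-H_\pm}{L_\pm^{2}}+\frac{(L_\pm-H_\pm)^2}{L_\pm^{2}H_\pm},
\]
in which the quadratic remainder is harmless (your linear bound squared gains an extra $\eps^\sigma$). In the leading term one then isolates, via Taylor expansion and Lemma~\ref{lemma52}, the piece
\[
\bigl(1+\tfrac{d}{dx}H_\pm(x_0(\eps),\eps)\bigr)\frac{x_0(\eps)-x}{L_\pm(x,\eps)^2}
=\bigl(1+\tfrac{d}{dx}H_\pm(x_0(\eps),\eps)\bigr)\Bigl(\frac{1}{L_\pm}\pm\frac{i\Gamma(\eps)}{L_\pm^2}\Bigr),
\]
and the crucial point is that $\int_{I_\eps}e^{-itx}L_\pm(x,\eps)^{-1}\,dx$ is bounded \emph{uniformly in $t$} after rescaling to $\int_{-M}^{M}e^{-isy}(y\mp i)^{-1}\,dy$ (a Dirichlet-type oscillatory integral, cf.\ \cite[Eq.~(3.56)]{JN} or \cite[Lem.~B.1]{BK}). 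Multiplied by the prefactor $|1+H'_\pm(x_0)|\leq C\eps^\sigma$, this delivers exactly the claimed power of $\eps$ with no logarithm.
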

\begin{proof}
Let us start with the case $\nu=-1$. It follows from the definition of $L_{\pm}(x,\eps)$ that we have $H_{\pm}(\xe,\eps)-L_{\pm}(\xe,\eps)=0$.

Using Taylor's formula with expansion point $x_0(\eps)$ we get the two estimates
\begin{align}
\abs{H_{\pm}(x,\eps)-L_{\pm}(x,\eps)-
(1+\tfrac{d}{dx}H_{\pm}(x_0(\eps),\eps)(x-x_0(\eps))}&\leq C\eps^{-1/2}\abs{x-x_0(\eps)}^2
\label{est-1}\\
\abs{H{_\pm}(x,\eps)-L_{\pm}(x,\eps)}&\leq C\eps^{1/2}\abs{x-x_0(\eps)}
\label{est-2}
\end{align}

Rewrite as follows
\begin{equation}
\frac{1}{H_{\pm}(x,\eps)} - \frac{1}{L_{\pm}(x,\eps)}=
\frac{L_{\pm}(x,\eps)-H_{\pm}(x,\eps)}{L_{\pm}(x,\eps)^2}
+\frac{\bigl(L_{\pm}(x,\eps)-H_{\pm}(x,\eps)\bigr)^2}{L_{\pm}(x,\eps)^2H_{\pm}(x,\eps)}.
\label{est-3}
\end{equation}
Note that \eqref{Hpm-lower} holds in the $\nu=-1$ case with $\eps^{\nu+2}$ replaced by $\eps^3$. Then we have
\begin{equation*}
\abs{H_{\pm}(x,\eps)}\geq C\abs{x-x_0(\eps)}
\end{equation*}
Using this lower bound in combination with  \eqref{est-1}, \eqref{est-2}, and \eqref{est-3} it follows that
\begin{align}
\biggl\lvert
\int_{I_{\eps}}e^{-itx} \Bigl(\frac{1}{H_{\pm}(x,\eps)}-
\frac{1}{L_{\pm}(x,\eps)}\Bigr) dx
\biggr\rvert
&\leq
\abs{1+\tfrac{d}{dx}H_{\pm}(x_0(\eps))} \ \biggl\lvert
\int_{I_{\eps}}e^{-itx}\frac{x_0(\eps)-x}{L_{\pm}(x,\eps)^2}\ dx\biggr\rvert
\label{eq-a}\\[7pt]
&\quad
+C\eps^{-1/2}\int_{I_{\eps}}\frac{\abs{x-x_0(\eps)}^2}{\abs{L_{\pm}(x,\eps)}^2}\
dx
\label{eq-b}\\[7pt]
&\quad
+C\eps\int_{I_{\eps}}\frac{\abs{x-x_0(\eps)}}{\abs{L_{\pm}(x,\eps)}^2}\ dx.
\label{eq-c}
\end{align}

Recall the choice of $I_{\eps}$, see \eqref{I-eps}.   We have
\begin{equation*}
\frac{\abs{x-x_0(\eps)}^2}{\abs{L_{\pm}(x,\eps)}^2}\leq1 .
\end{equation*}
Since the length of $I_{\eps}$ equals $\frac12\tb\eps$, it follows that the term  \eqref{eq-b} is bounded by $C\eps^{1/2}$. The term \eqref{eq-c} is estimated using
\eqref{L-eq} as follows
\begin{equation*}
 \eps\int_{I_{\eps}}\frac{\abs{x-x_0(\eps)}}{\abs{L_{\pm}(x,\eps)}^2}\ dx \leq C
\eps\int_{I_{\eps}}\frac{1}{\sqrt{(x-\xe)^2+\eps^3}}\ dx
\leq C\eps\abs{\ln(\eps)}.
\end{equation*}
The integrand on the right hand side of \eqref{eq-a} is rewritten as
\begin{equation*}
\frac{x_0(\eps)-x}{L_{\pm}(x,\eps)^2}=\frac{1}{L_{\pm}(x,\eps)}
\pm\frac{i\Gamma(\eps)}{L_{\pm}(x,\eps)^2}.
\end{equation*}
Thus,
\begin{align*}
\biggl\lvert
\int_{I_{\eps}}e^{-itx}\frac{x_0(\eps)-x}{L_{\pm}(x,\eps)^2}\ dx\biggr\rvert
\leq \biggl\lvert
\int_{I_{\eps}}e^{-itx}\frac{1}{L_{\pm}(x,\eps)}dx
\biggr\rvert
+
\int_{I_{\eps}}\frac{\Gamma(\eps)}{\abs{L_{\pm}(x,\eps)}^2}\ dx
\end{align*}
Using the definition of $L_{\pm}(x,\eps)$ we find
\begin{equation*}
\int_{I(\eps)}\frac{\Gamma(\eps)}{\abs{L_{\pm}(x,\eps)}^2}\ dx
\leq \int_{\R}\frac{\Gamma(\eps)}{x^2+\Gamma(\eps)^2}\ dx=\pi.
\end{equation*}
Now define
\begin{equation} \label{length}
\ell(\eps)=\frac{\text{length}(I_{\eps})}{2}\, .
\end{equation}
Then
\begin{equation*}
\biggl\lvert
\int_{I_{\eps}}e^{-itx}\frac{1}{L_{\pm}(x,\eps)}dx
\biggr\rvert
=\biggl\lvert
\int_{-\ell(\eps)/\Gamma(\eps)}^{\ell(\eps)/\Gamma(\eps)}
e^{-it\Gamma(\eps)y}\frac{1}{y\mp i}dy
\biggr\rvert
\leq C.
\end{equation*}
Here, in the last step, we have used \cite[Eq.~(3.56)]{JN}, see also \cite[Lem.~B.1]{BK}.
Using the estimate \eqref{eq511}
the proof is finished in the case $\nu=-1$. The cases $\nu\geq1$ are treated in the same manner.
\end{proof}

Recall that  $\xe=\tb\eps+\cO(\eps^{3/2})$ and  $\Gamma(\eps)=-I(\xe,\eps)$.
We have the following asymptotics of $\Gamma(\eps)$.
\begin{equation}\label{gamma-asymp}
\Gamma(\eps)=
\begin{cases}
\tb^{-1/2}\beta_{-1}\, \eps^{3/2}+\cO(\eps^2), & \nu=-1,
\\[5pt]
\tb^{1/2}\beta_{1}\, \eps^{3/2}+\cO(\eps^{2}), & \nu=1,
\\[5pt]
-i^{\nu-1}\, \tb^{\nu/2}\, \eps^{(\nu+2)/2}\beta_{\nu}
+\cO(\eps^{(\nu+4)/2}), & \nu\geq3,
\end{cases}
\end{equation}
where the coefficients   $\beta_{-1}, \beta_1$, and $\beta_{\nu}$  are given by \eqref{g-1-def} and  \eqref{eq427}.

Let
\begin{equation}\label{p-eq}
p = \begin{cases}
\eps^{1/2}, & \nu=-1,1,\\[6pt]
\eps^2, & \nu\geq3.
\end{cases}
\end{equation}

\medskip

\begin{lem}\label{lemma54}
For $\eps$ sufficiently small we have
\begin{equation}\label{A-est}
\abs{A_{g_{\eps}}(t)-e^{-it(\xe-i\Gamma(\eps))}}
\leq C\, \eps^p\, .
\end{equation}
\end{lem}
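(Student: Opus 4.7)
My plan is to chain together three ingredients already in place: Stone's formula together with the SLFG block representation, Lemma \ref{lemma51} (replacing $F_\pm$ by $H_\pm$), and Lemma \ref{lemma53} (replacing $H_\pm$ by the linearised $L_\pm$), after which the surviving model integral can be evaluated explicitly as a Poisson kernel. Starting from the identity recorded just before Lemma \ref{lemma51},
$$A_{g_\eps}(t) = \lim_{\eta\downarrow 0}\frac{1}{2\pi i}\int_{I_\eps} e^{-itx}\Bigl(\frac{1}{F(x+i\eta,\eps)} - \frac{1}{F(x-i\eta,\eps)}\Bigr)dx,$$
I would first apply Lemma \ref{lemma51} to substitute $F_\pm \to H_\pm$ with an error $O(\eps^{q(\nu)})$, and then Lemma \ref{lemma53} to substitute $H_\pm \to L_\pm$ with the error recorded there.

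For the model integral, using $L_\pm(x,\eps) = -(x-x_0(\eps))\mp i\Gamma(\eps)$ from \eqref{L-eq} I would compute
$$\frac{1}{L_+(x,\eps)} - \frac{1}{L_-(x,\eps)} = \frac{2i\Gamma(\eps)}{(x-x_0(\eps))^2 + \Gamma(\eps)^2},$$
and invoke the standard Fourier transform of the Poisson kernel to obtain, for $t>0$,
$$\frac{\Gamma(\eps)}{\pi}\int_\R \frac{e^{-itx}\,dx}{(x-x_0(\eps))^2 + \Gamma(\eps)^2} = e^{-it(x_0(\eps) - i\Gamma(\eps))},$$
which is exactly the target main term. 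Extending the integration from $I_\eps$ to $\R$ costs a tail, which I would estimate by the crude bound $C\,\Gamma(\eps)/\ell(\eps)$, with $\ell(\eps)$ from \eqref{length}. Inserting the asymptotics \eqref{gamma-asymp} of $\Gamma(\eps)$ and the definition \eqref{I-eps} of $I_\eps$ gives a tail of order $\eps^{1/2}$ when $\nu\in\{-1,1\}$ and of order $\eps^2$ when $\nu\geq 3$.

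The main obstacle is the case-by-case bookkeeping rather than anything conceptual: one has to verify in each of the three regimes that the three accumulated errors (the two substitutions and the Lorentzian tail) are all dominated by $p$ from \eqref{p-eq}, which is precisely why $I_\eps$ is calibrated differently for $\nu\in\{-1,1\}$ and for $\nu\geq 3$ in \eqref{I-eps} --- the length of $I_\eps$ must be large enough to kill the tail yet small enough that the replacements in Lemmas \ref{lemma51} and \ref{lemma53} are valid. Uniformity in $t>0$ is automatic, since the Poisson-kernel Fourier transform is $t$-uniformly bounded and the auxiliary integrals $\int_{I_\eps} e^{-itx}/L_\pm(x,\eps)\,dx$ handled inside the proof of Lemma \ref{lemma53} are also uniform in $t$.
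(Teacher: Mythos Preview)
Your proposal is correct and follows essentially the same route as the paper: combine Lemmas~\ref{lemma51} and~\ref{lemma53} to reduce to the model integral with $L_\pm$, rewrite the difference as a Poisson kernel, extend to $\R$ with the tail bound $C\,\Gamma(\eps)/\ell(\eps)$, and evaluate via the residue theorem. The only cosmetic difference is that the paper records the tail as $C\eps^\nu$ while you state the sharper values $\eps^{1/2}$ and $\eps^2$ directly, but the structure of the argument is identical.
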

\begin{proof}
From  Lemmas \ref{lemma51} and \ref{lemma53} it follows that
\begin{equation}
A_{g_{\eps}}(t) = \frac{1}{2\pi i}\int_{I_{\eps}}e^{-itx}\Bigl[
\frac{1}{L_+(x,\eps)}-\frac{1}{L_-(x,\eps)}
\Bigr]dx + \mathcal{O}(\eps^p),
\end{equation}
with an error term uniform in $t>0$.
We have
\begin{equation}\label{L-comp}
\frac{1}{2\pi i}\int_{I_{\eps}}e^{-itx}\Bigl[
\frac{1}{L_+(x,\eps)}-\frac{1}{L_-(x,\eps)}
\Bigr]dx
=\frac{1}{\pi}\int_{I_{\eps}}e^{-itx}
\frac{\Gamma(\eps)}{(x-x_0(\eps))^2+\Gamma(\eps)^2}dx.
\end{equation}
Since $\ell(\eps)/\Gamma(\eps)\to \infty$ as $\eps\to 0$, equations \eqref{length} and \eqref{gamma-asymp} imply
\begin{equation}\label{dif-comp}
\begin{aligned}
\Bigl\lvert\Bigl(\int_{\R}-\int_{I_{\eps}}\Bigr)
e^{-itx}
\frac{\Gamma(\eps)}{(x-x_0(\eps))^2+\Gamma(\eps)^2}dx\Bigr\rvert
&\leq
C \int_{\ell(\eps)}^{\infty}\frac{\Gamma(\eps)}{x^2+\Gamma(\eps)^2}dx \\[5pt]
&= C\int_{\ell(\eps)/\Gamma(\eps)}^\infty \frac{1}{1+t^2}\ dt
\leq  C\  \frac{\Gamma(\eps)}{\ell(\eps)} \\[5pt]
& \leq C  \eps^\nu  .
\end{aligned}
\end{equation}
By the residue theorem we have
\begin{equation*}
\frac{1}{\pi}\int_{\R}e^{-itx}
\frac{\Gamma(\eps)}{(x-x_0(\eps))^2+\Gamma(\eps)^2}dx
=e^{-it(x_0(\eps)-i\Gamma(\eps)}.
\end{equation*}
Estimate \eqref{A-est} thus follows from the definition of $I_{\eps}$, see \eqref{I-eps}, and \eqref{gamma-asymp}.
\end{proof}

We can now state our main result.

\begin{thm}\label{main}
Suppose that Assumptions~\ref{nu-assump} and~\ref{pos-assump} hold. Then for all $\lambda$ sufficiently small we have
\begin{equation}
\sup_{t>0} \abs{\ip{\psi_0}{\big[ e^{-itH_{\lambda}}-e^{-it(x_0(\lambda)-i\Gamma(\lambda))}\big] \psi_0} }
\leq C \lambda^{2p}\, ,
\end{equation}
where
\begin{equation}\label{gamma-lambda}
\Gamma(\lambda)=
\begin{cases}
\tb^{\, -\frac12}\beta_{-1}\, |\lambda|^3+\cO(\lambda^4), &\  \nu=-1,
\\[5pt]
\tb^{\, \frac 12}\beta_{1}\,  |\lambda|^3+\cO(\lambda^{4}), & \ \nu=1,
\\[5pt]
-i^{\nu-1}\, \tb^{\, \frac \nu2}\,  |\lambda|^{\nu+2}\, \beta_{\nu}
+\cO(|\lambda|^{\nu+4}), &\  \nu\geq3,
\end{cases}
\end{equation}
and
\begin{equation} \label{x0-asymp}
x_0(\lambda)=\tb \lambda^2+\cO(|\lambda|^{3}).
\end{equation}
Recall that $p$ is given by \eqref{p-eq}.
\end{thm}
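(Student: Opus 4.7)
The plan is to deduce Theorem~\ref{main} from Lemma~\ref{lemma54} via the substitution $\eps = \lambda^2$, under which $H_\lambda = H_\eps$ and the $\eps$-dependent objects $\xe$, $\Gamma(\eps)$, $I_\eps$, and the error bound $O(\eps^p)$ translate into their $\lambda$-dependent counterparts, the overall error becoming $O(|\lambda|^{2p})$. Since Lemma~\ref{lemma54} already supplies such a bound for the compressed amplitude $A_{g_\eps}(t) = \langle \psi_0, e^{-itH_\eps} g_\eps(H_\eps) \psi_0 \rangle$ uniformly in $t > 0$, two ingredients remain: (i) removing the spectral cut-off $g_\eps(H_\eps) = \chi_{I_\eps}(H_\eps)$ to recover the full survival amplitude, and (ii) translating the $\eps$-asymptotics \eqref{R-x0} and \eqref{gamma-asymp} into the $\lambda$-asymptotics \eqref{x0-asymp} and \eqref{gamma-lambda}.

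For (i), I would decompose
\begin{equation*}
\langle \psi_0, e^{-itH_\eps}\psi_0\rangle - A_{g_\eps}(t) = \int_{\R\setminus I_\eps} e^{-itx}\,d\mu_{\psi_0}(x),
\end{equation*}
where $\mu_{\psi_0}$ denotes the spectral measure of $H_\eps$ at $\psi_0$, and bound the right-hand side in absolute value, uniformly in $t$, by $\mu_{\psi_0}(\R\setminus I_\eps) = 1 - \mu_{\psi_0}(I_\eps)$. Stone's formula combined with the SLFG identity $\langle\psi_0,(H_\eps - z)^{-1}\psi_0\rangle = 1/F(z,\eps)$ (a consequence of \eqref{eq-eff}) identifies the spectral density as $\tfrac{1}{\pi}\im(1/F_+(x,\eps))$. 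I would then replace $1/F_+$ by the Lorentzian $1/L_+$ using the approximation chain of Lemmas~\ref{lemma51} and~\ref{lemma53} specialised to $t = 0$, and invoke the elementary computation
\begin{equation*}
\frac{1}{\pi}\int_{I_\eps}\im\frac{1}{L_+(x,\eps)}\,dx = 1 - O\!\left(\frac{\Gamma(\eps)}{\ell(\eps)}\right),
\end{equation*}
together with $\Gamma(\eps)/\ell(\eps) = O(\eps^p)$ from the choice \eqref{I-eps} of $I_\eps$, to conclude $\mu_{\psi_0}(\R\setminus I_\eps) = O(\eps^p)$.

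Ingredient (ii) will be a straightforward case-by-case substitution: inserting $\eps = \lambda^2$ in \eqref{R-x0} yields $x_0(\lambda) = \tb\lambda^2 + \cO(|\lambda|^3)$, and the three cases of \eqref{gamma-asymp} convert directly into \eqref{gamma-lambda}. Combining (i) and (ii) with Lemma~\ref{lemma54} then gives the stated $\sup_{t>0}$ bound $C|\lambda|^{2p}$. The hard part will be (i): one must verify that the non-oscillatory tail $\mu_{\psi_0}(\R\setminus I_\eps)$ is of the same order as the oscillatory error of Lemma~\ref{lemma54}, which requires that the Lorentzian tail and the approximation errors from Lemmas~\ref{lemma51}--\ref{lemma53} both contribute precisely at order $\eps^p$—a delicate balancing built into the length of the interval $I_\eps$.
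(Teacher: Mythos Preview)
Your proposal is correct and follows essentially the same route as the paper. The only difference is efficiency: for ingredient~(i) you re-run the Stone/SLFG/Lorentzian chain of Lemmas~\ref{lemma51}--\ref{lemma53} at $t=0$ to estimate $\mu_{\psi_0}(\bR\setminus I_\eps)$, whereas the paper simply observes that $\mu_{\psi_0}(I_\eps)=A_{g_\eps}(0)$ and invokes Lemma~\ref{lemma54} itself at $t=0$ to get $\abs{A_{g_\eps}(0)-1}\leq C\eps^{p}$ directly---this is the Hunziker trick. Everything else (the spectral-measure bound on the tail, the substitution $\eps=\lambda^{2}$, the translation of \eqref{R-x0} and \eqref{gamma-asymp}) matches.
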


\begin{proof}
Since the coefficients $b, \tb$ and $\beta_\nu, \nu=-1,2,3\dots,$ are even in $A$, the claim will follow
from the estimate
\begin{equation}\label{enough}
\sup_{t>0} \abs{\ip{\psi_0}{\big[ e^{-itH_{\eps}}-e^{-it(\xe-i\Gamma(\eps))}\big] \psi_0} }
\leq C\, \eps^p
\end{equation}
upon setting $\lambda=\sqrt{\eps}$.
With Lemma  \ref{lemma54} at hand, the proof of \eqref{enough} becomes a straightforward
variation on the proof of \cite[Theorem~3.7]{JN}.
Following the argument of Hunziker \cite{hun}, see also \cite{JN}, we first we apply equation
\eqref{A-est} with $t=0$ to get $|A_{g_{\eps}}(0)-1|   \lesssim  \eps^p$. Consequently, by \eqref{Ageps},
\begin{equation}  \label{eq-hunz-1}
\| \big(1- g_\eps (H_\eps )\big)^{\frac 12}\, \psi_0\|_2 \, \leq \, C  \eps^p.
\end{equation}
This implies
\begin{align} \label{eq-hunz-2}
\big | \LL \psi_0, \, e^{-itH_\eps}\, \psi_0\RR -A_{g_{\eps}}(t) \big | & = \big | \LL \big(1- g_\eps (H_\eps)\big)^{\frac 12}\,
\psi_0,\, e^{-it H_\eps}\big(1- g_\eps (H_\eps)  \big)^{\frac 12}\, \psi_0 \RR \big)  \nonumber \\
& \leq \| \big(1- g_\eps (H_\eps)  \big)^{\frac 12}\, \psi_0\|_2\, \leq\  C \eps^p.
\end{align}
Equation \eqref{enough} now follows from Lemma \ref{lemma54}. Since  $\Gamma(\eps)$ satisfies \eqref{gamma-asymp}, and
sine
$\xe=\tb\eps+\cO(\eps^{3/2})$, cf.~equation \eqref{R-x0}, this completes the proof.
\end{proof}

\begin{rem}
Notice that the leading term in the asymptotic expansion of $\Gamma$ is of the same order of magnitude for $\nu=-1$ and $\nu=1$. This is yet another 
important difference with respect to the resonances induced by an electric field, cf.~\cite[Thm.~3.7]{JN}.
\end{rem}

\begin{cor}\label{cor-unique}
Let the assumptions of Theorem \ref{main} be satisfied. Assume that there exist $\widehat x_0(\lambda)$ and $\widehat\Gamma(\lambda)$ such that
for all $\lambda$ small enough we have
$$
\sup_{t>0} \abs{\ip{\psi_0}{\big[ e^{-itH_{\lambda}}-e^{-it(\widehat x_0(\lambda)-i\widehat\Gamma(\lambda))}\big] \psi_0} }
\leq C\, \lambda^{2p}\, .
$$
Then
$$
\lim_{\lambda\to 0} \frac{\widehat\Gamma(\lambda)}{\Gamma(\lambda)} =\lim_{\lambda\to 0} \frac{\widehat x_0(\lambda)}{x_0(\lambda)} =1.
$$
\end{cor}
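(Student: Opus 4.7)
The plan is to reduce the corollary to a direct analysis of the discrepancy between the two exponential profiles. Applying the triangle inequality to the hypothesis and to the bound established in Theorem~\ref{main}, one obtains
\begin{equation*}
\sup_{t>0}\bigl|e^{-it(x_0(\lambda)-i\Gamma(\lambda))} - e^{-it(\widehat x_0(\lambda)-i\widehat\Gamma(\lambda))}\bigr| \le C\lambda^{2p}.
\end{equation*}
Writing $z_1=x_0-i\Gamma$, $z_2=\widehat x_0-i\widehat\Gamma$, a direct calculation gives the identity
\begin{equation*}
|e^{-itz_1}-e^{-itz_2}|^2 = \bigl(e^{-t\Gamma}-e^{-t\widehat\Gamma}\bigr)^2 + 4e^{-t(\Gamma+\widehat\Gamma)}\sin^2\!\bigl(\tfrac{t(x_0-\widehat x_0)}{2}\bigr),
\end{equation*}
so both nonnegative terms must be $O(\lambda^{4p})$ uniformly in $t>0$. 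This decouples the modulus and phase contributions and is the starting point for what follows.

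To control the width ratio I would evaluate the first term at $t=1/\Gamma(\lambda)$, obtaining $|e^{-1}-e^{-\widehat\Gamma(\lambda)/\Gamma(\lambda)}|=O(\lambda^{2p})$. A short three-case analysis then forces $\widehat\Gamma(\lambda)/\Gamma(\lambda)\to 1$: if $\widehat\Gamma/\Gamma$ is unbounded along a subsequence, then $e^{-\widehat\Gamma/\Gamma}\to 0$ and the left-hand side tends to $e^{-1}\ne 0$; if it tends to zero, the left-hand side tends to $|e^{-1}-1|\ne 0$; and if it stays in a compact subset of $(0,\infty)$, continuity and injectivity of $x\mapsto e^{-x}$ identify its only possible accumulation point as $1$.

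For the real parts, set $\beta(\lambda)=(x_0(\lambda)-\widehat x_0(\lambda))/(2(\Gamma(\lambda)+\widehat\Gamma(\lambda)))$ and substitute $\alpha=t(\Gamma+\widehat\Gamma)$, so the second-term bound becomes $\sup_{\alpha>0}e^{-\alpha}\sin^2(\alpha\beta(\lambda))\le C^2\lambda^{4p}/4$. I would first rule out $|\beta(\lambda)|\ge 1$ along any subsequence: choosing $\alpha=\pi/(2|\beta(\lambda)|)\in[0,\pi/2]$ yields $\sin^2(\alpha\beta)=1$ and $e^{-\alpha}\ge e^{-\pi/2}$, a fixed positive constant, violating the decay. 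Once $|\beta(\lambda)|<1$, the estimate $|\sin(\alpha\beta)|\le\alpha|\beta|$ together with $\sup_{\alpha>0}\alpha^2 e^{-\alpha}=4e^{-2}$ yields $|\beta(\lambda)|=O(\lambda^{2p})$. Therefore $|x_0(\lambda)-\widehat x_0(\lambda)|\le C\lambda^{2p}\Gamma(\lambda)$, and combining with $x_0(\lambda)\sim\tb\lambda^2$ and the fact that $\lambda^{2p}\Gamma(\lambda)=o(\lambda^2)$ in all cases of \eqref{gamma-lambda} gives $\widehat x_0(\lambda)/x_0(\lambda)\to 1$.

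The main technical point is ensuring robustness with respect to arbitrary subsequential behavior of $\widehat\Gamma/\Gamma$ and $\beta(\lambda)$; the extremal choices of $t$ above deliver the required contradiction or bound along every subsequence, so both ratios necessarily tend to~$1$. As an alternative, one can simply invoke the abstract uniqueness result~\cite[Prop.~1.1]{JN2} already cited in the introduction, whose hypotheses are satisfied by the two given pairs $(x_0(\lambda),\Gamma(\lambda))$ and $(\widehat x_0(\lambda),\widehat\Gamma(\lambda))$.
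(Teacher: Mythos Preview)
Your approach is correct in spirit and substantially different from the paper's: the paper simply invokes \cite[Prop.~1.1]{JN2} as a black box to obtain a quantitative bound on $|\widehat\Gamma-\Gamma|+|\widehat x_0-x_0|$, whereas you give a self-contained elementary argument based on the explicit decomposition of $|e^{-itz_1}-e^{-itz_2}|^2$ into modulus and phase parts. Your route has the advantage of being completely transparent; the paper's route is shorter but outsources the work. You even note the citation alternative at the end, so you clearly see both paths.

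There is, however, one genuine slip. In the phase step you write ``the estimate $|\sin(\alpha\beta)|\le\alpha|\beta|$ together with $\sup_{\alpha>0}\alpha^2 e^{-\alpha}=4e^{-2}$ yields $|\beta(\lambda)|=O(\lambda^{2p})$''. This inequality goes the wrong way: an \emph{upper} bound on $\sin$ gives an upper bound on $e^{-\alpha}\sin^2(\alpha\beta)$, which tells you nothing about $|\beta|$. What you need is a \emph{lower} bound. The fix is immediate: once $|\beta|<1$, evaluate at a fixed $\alpha$, say $\alpha=1$, and use $|\sin\beta|\ge(\sin 1)\,|\beta|$ for $|\beta|\le 1$ to get
\[
e^{-1}(\sin 1)^2\,\beta(\lambda)^2 \le e^{-\alpha}\sin^2(\alpha\beta(\lambda))\big|_{\alpha=1} \le \tfrac{C^2}{4}\lambda^{4p},
\]
hence $|\beta(\lambda)|=O(\lambda^{2p})$ as claimed. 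After this correction the rest of your argument goes through unchanged. (A minor point you leave implicit: $\widehat\Gamma(\lambda)\ge 0$ for small $\lambda$, since otherwise $|e^{-it(\widehat x_0-i\widehat\Gamma)}|\to\infty$ would violate the hypothesis; this is needed when you form $\beta$ and substitute $t=1/\Gamma$.)
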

\begin{proof}
From \eqref{gamma-asymp} and from \cite[Prop.~1.1]{JN2} it follows that
$$
|\widehat\Gamma(\lambda)-\Gamma(\lambda)| + |\widehat x_0(\lambda)- x_0(\lambda)|  = \mathcal{O}(|\lambda|^{2p+2+|\nu|}) \qquad\lambda\to 0
$$
for all $\nu=-1,1,3\dots$. The claim thus follows from equations  \eqref{gamma-lambda} and \eqref{x0-asymp}.
\end{proof}

\appendix
\section{Comments on \eqref{Hpm-lower}}
\label{sec-app}
The bound
\begin{equation*}
\abs{H_{\pm}(x,\eps)}^2\geq C\bigl(\abs{x-x_0(\eps)}^2+\eps^{\nu+2}\bigr)
\end{equation*}
may require some explanation. We give the details here.
We have
\begin{equation*}
H_{\pm}(x,\eps)=R(x,\eps)-R(\xe,\eps)\pm i I(\xe,\eps).
\end{equation*}
Then
\begin{multline*}
R(x,\eps)-R(\xe,\eps)
\\
=\xe-x+\eps^{3/2}[f(x,\eps)-f(\xe,\eps)]
+\xe\eps g(\xe,\eps) - x\eps g(x,\eps).
\end{multline*}
$f(x,\eps)$ is a polynomial in $x$. Thus we have
\begin{equation*}
\eps^{3/2}[f(x,\eps)-f(\xe,\eps)]=\eps^{3/2}(x-\xe)[c+\cO(\eps)].
\end{equation*}
$g(x,\eps)$ is also a polynomial in $x$. Thus
\begin{align*}
x\eps g(\xe,\eps) - \xe\eps g(x,\eps)&=\eps(x-\xe)g(x,\eps)
+\eps\xe[g(x,\eps)-g(\xe,\eps)]
\\
&=\eps(x-\xe)[c+\cO(\eps)]
\end{align*}
Thus we have
\begin{equation*}
\abs{R(x,\eps)-R(\xe,\eps)}^2=\abs{x-\xe}^2(1+\cO(\eps^2))
\geq\tfrac12\abs{x-\xe}^2
\end{equation*}
for $\eps$ sufficiently small.
Since $\beta_{\nu}\neq0$, we have
\begin{equation*}
I(\xe,\eps)=i^{\nu-1}\eps(\xe)^{\nu/2}[\beta_{\nu}+\cO(\xe)],
\end{equation*}
and then
\begin{equation*}
\abs{I(\xe,\eps)}^2\geq C \eps^{\nu+2}.
\end{equation*}
The estimate \eqref{Hpm-lower} follows from these computations.

\end{document}